\def\ps@headings{%
\def\@oddhead{\mbox{}\scriptsize\rightmark \hfil \thepage}%
\def\@evenhead{\scriptsize\thepage \hfil \leftmark\mbox{}}%
\def\@oddfoot{}%
\def\@evenfoot{}}
\newtheorem{theorem}{Theorem}[section]
\newtheorem{lemma}{Lemma}[section]
\newtheorem{proposition}{Proposition}[section]
\newtheorem{remark}[theorem]{Remark}
\newcommand{\com}[1]{\textbf{\color{red} (COMMENT: #1)}} 
\newcommand{\com}[1]{}
\begin{document}

\title{Spatial Topology Adjustment for Minimizing Multicell Network Power Consumption}

\author{Taesoo Kwon,~\IEEEmembership{Member,~IEEE}

\thanks{T. Kwon is with the Electronics and Telecommunications Research Institute (ETRI), Daejeon, 305-700, South Korea (e-mail: tskwon@etri.re.kr).}
}

\markboth{Kwon: Spatial Topology Adjustment for Minimizing Multicell Network Power Consumption}
{Kwon: Spatial Topology Adjustment for Minimizing Multicell Network Power Consumption}

\maketitle

\begin{abstract}
  While the deployment of base stations (BSs) becomes increasingly dense in order to accommodate the growth in traffic demand, these BSs may be under-utilized during most hours except peak hours.
  Accordingly, the deactivation of these under-utilized BSs is regarded as the key to reducing network power consumption; however, the remaining active BSs should increase their transmit power in order to fill network coverage holes that result from BS switching off.
  This paper investigates the optimal balance between such beneficial and harmful effects of BS switching off in terms of minimizing the network power consumption, through comprehensively considering the spatial BS distribution, BS transmit power, BS power consumption behaviors, radio propagation environments, and frequency reuse.
  When BSs are deployed according to a homogeneous Poisson point process, the suboptimal and approximated design problems are formulated as geometric programming and the solutions lead to insightful design principles for the key design parameters including the spatial density, transmit power, and frequency reuse of remaining active BSs.
  The numerical results demonstrate that these solutions are very close to the optimal balances.

%
\end{abstract}

\begin{IEEEkeywords}
  Green networks, energy saving, power consumption, cell breathing, stochastic geometry.
\end{IEEEkeywords}

\section{Introduction} \label{sec:Intro}

  Recently, the wide dissemination of smart devices has accelerated the wireless traffic demand, and in order to accommodate this significant growth, the deployment of base stations (BSs) in wireless cellular networks continues to become more and more dense.
  This growth further aggravates concerns about the ever-increasing energy consumption and carbon footprint \cite{Bhargava11_GreenCellular}, when considering that the power consumption of BSs in cellular networks is 60--80\% of the operators' power consumption \cite{Fehske11_GlobalFootprint} and the energy provision for the BSs is to 50\% of the total operational cost \cite{Auer10_EnergyAwareRan}.
  Therefore, energy savings associated with the BSs are a crucial issue for decreasing not only harmful greenhouse gas emissions but also operational expenditures (OPEX).
  The efforts to reduce BS energy consumption are being undertaken at various levels \cite{Bhargava11_GreenCellular,Auer10_EnergyAwareRan}: the energy source level (e.g., adopting renewable energy resources \cite{McGuire11_RenewableEnergyBs}), component level (e.g., improving the power amplifier efficiency \cite{Brubaker09_EfficiencyPAs}), link and protocol level (e.g., discontinuous transmission for long standby BSs \cite{Frenger11_BsDtx}), and network topology level (e.g., heterogeneous network deployments with optimal balances of macro-, micro-, pico-, femto-cells, and relay stations \cite{Son11_EnergyHierarchicalCell,Fettweis11_GreenRelay}).

  In wireless cellular networks, BS deployment is typically designed to accommodate peak time traffic (e.g., evening hours) that is up to ten times higher than that of the off-peak periods (e.g., late night hours) and their traffic volume varies temporally in a comparatively regular pattern throughout the day \cite{Alu08_Traffic,Auer10_EnergyAwareRan}.
  Accordingly, it is likely that BSs are under-utilized during most hours except peak hours.
  However, a BS has poor energy efficiency particularly in low load situations; even at zero load, the direct current (DC) power consumption of a BS remains at approximately 50\% of the peak power \cite{Ferling10_EnergyEfficiency}.
  In this regard, the entire or partial deactivation of under-utilized BSs could be a key to saving energy \cite{Bhargava11_GreenCellular,Mancuso11_RedCostCellular,Auer11_EnergyWirelessNet}.
  However, the BS switching off technology creates a coverage deficit issue due to the deactivated BSs \cite{Oh11_DynamicEnergyCell,Han13_GreenCellCoop}.
  In order to fill this coverage deficit, the service area of the remaining active BSs must be expanded.
  Hence, the BS deactivation should be applied carefully through comprehensively considering both the energy saving effects of under-utilized BSs and the increased burden on the remaining BSs that is required in order to maintain service quality.
  This paper investigates the minimization of the total power consumption of BSs in multicell networks through quantifying and balancing a tradeoff between the beneficial and harmful effects of BS switching off.
  The optimal solution of the problems will enable the derivation of the inherent design principles that depict the relationship among the BS power consumption model, radio environments, and service constraints.


\subsection{Related Work} \label{ssec:RelatedWork}


  Many studies have proposed various algorithms for reducing access network power consumption by adapting the number of active BSs to the dynamic traffic demands and, more specifically, for determining when and which BSs should be activated and how the remaining active BSs should expand the physical coverage and accommodate the current network load in order to maintain service quality, e.g., \cite{Chiaraviglio08_EnergyUmts,Samdanis10_EnergySonCell,Samdanis10_EnergyDynCell,Marsan11_EnergyCoopCell,OhSon13_DynBsOnOff}.
  The suggestions in these studies have primarily focused on the design of specific algorithms and they have verified that the algorithms perform well for given BS power consumption models and radio environments regarding the minimization of the power consumption of the total BSs in a network via simulations.
  However, these studies have not provided comprehensive and insightful understandings about BS switching off because the performance of such specific algorithms depends heavily on the BS power consumption properties (e.g., how the transmit power contributes to the total power consumption and at which level the BSs should be deactivated) and the radio environments (e.g., the radio propagation model).

%
%
%


  Several studies have investigated partial or entire BS switching off based on the practical BS power consumption behavior.
  In general, the power consumption behavior of an active BS can be expressed as the sum of the transmit power-dependent or load-dependent part due to the radio transmission and the constant part remaining due to cooling, power supply, and monitoring \cite{Holtkamp13_BsPowerModel}.
  The work of \cite{Ferling10_EnergyEfficiency} introduced the discontinuous transmission (DTX) of a BS, which was motivated by an affine BS power consumption model, and it evaluated the performance based on the third generation partnership project long-term evolution (3GPP LTE) system parameters.
  The work of \cite{Peng11_PowerSavingCell} considered a practical BS power consumption model and addressed the effect of this model on the algorithm design and performance.
  In contrast, in \cite{Bhaumik10_BreatheCool}, it was claimed that when adjusting the cell size of the remaining active BSs, the optimal cell size depended significantly on the amount of fixed power consumption, i.e., a high fixed power consumption in the active BSs resulted in a reduction in the number of active BSs and an increasing in the cell size in terms of BS energy saving.
  Recently, the work of \cite{Holtkamp13_MinBsPwc} introduced the joint optimization of the BS transmission power reduction, DTX, and resource allocation through considering an affine BS power consumption model.
  However, these studies only rely on extensive simulations for the performance assessment of their algorithms.



  Due to the difficulty of expressing the effect of intercell interference in wireless multicell networks, the performance of BS switching off algorithms has primarily been evaluated through simulation.
  The work presented in \cite{Han12_EnergyCoopCell} mathematically analyzed the outage probability of the user equipment (UE) at the worst-case location, in order to quantify the effect of the cooperative coverage extension of the remaining active BSs.
  In that study, the worst-case location was considered to exclude the impact of the intercell interference, but the intercell interference is a key factor that system designers should not overlook.
  The studies presented in \cite{Shamai10_GreenUplink,Cao12_OptBsDensity} mathematically analyzed the effect of base station deactivation through statistically modeling the location of BSs as a homogeneous Poisson point process (PPP).
  The study in \cite{Shamai10_GreenUplink} analyzed the adverse effect of BS switching off on the uplink power consumption but only considered the intercell interference for a one-dimensional cell deployment.
  The study in \cite{Cao12_OptBsDensity} examined the optimal BS density under a service outage constraint in interference-limited homogeneous and heterogeneous cellular networks.
  However, neither of these studies have comprehensively considered the multicell operational parameters such as the BS switching off, BS power consumption model, BS transmit power, and frequency reuse.
\subsection{Contributions and Organization} \label{ssec:Contributions}

  This paper analytically investigates the optimal balance between the power saving that results from switched-off BSs and the load increase of the remaining active BSs in terms of minimizing the multicell network power consumption. 
  The key contributions are highlighted in the following three aspects.

  \subsubsection{Comprehensively formulating a spatial topology adjustment problem}
  Optimization problems are formulated with the objective of minimizing the network power consumption under network coverage and capacity constraints.
  These design problems are characterized by
    (i) the network power consumption model incorporating the affine power consumption model of an individual BS and
    (ii) the network coverage and capacity constraints being expressed as functions of the spatial density ($\lambda$), transmit power ($p$), and frequency reuse ($\beta$) of the remaining active BSs after switching off some BSs.
  These formulations provide baselines from which to build suboptimal or approximated design problems that enable closed form solutions and useful optimality conditions.

  \subsubsection{Deriving the design principles for green multicell planning}
  The suboptimal $\lambda$ and $p$ are derived in closed forms through replacing the network coverage constraint with its lower bound (LB) that enable the design problem to be convexified.
  This solution leads to the design principles for greening multicell networks, where the operation for the BS switching off and transmit power adjustment can be split into four cases according to the BS sleep mode effect, i.e., the difference ($\bar{P}$) between the standby power consumption of an active BS and the power consumption of a sleep mode BS.
  It is noteworthy that the interval of $\bar{P}$ and the values of $\lambda$ and $p$ for each case are expressed in terms of the spatial BS distribution, BS transmit power, affine BS power consumption model, required service quality, and radio propagation model.

  \subsubsection{Investigating the impact of frequency reuse on green multicell networks}
  The partial frequency reuse (PFR) that allows each BS to use only part of the spectrum is one of the most common methods for intercell interference mitigation.
  In order to investigate the effect of the PFR in terms of the network power consumption, the design problem to jointly optimize $\beta$ in addition to $\lambda$ and $p$ is formulated, where $\beta$ denotes the number of frequency bands for the PRF.
  The design problem is reexpressed as geometric programming (GP) and the necessary conditions for the optimality are derived.
  These conditions result in insightful design principles, e.g., $\frac{\lambda}{\beta}$ that indicates the spatial density of interfering BSs, i.e., the active BS density using the same subband, remains constant.

  The remainder of this paper is organized as follows.
  Section~\ref{sec:ModelProb} describes the system model for the multicell deployment and BS switching operation, and it defines the spatial performance metrics for the design problem formulation.
  Sections~\ref{sec:GreenCellFullFreq} formulates and solves the design problems for optimizing the BS switching off and transmit power adjustment in terms of minimizing the multicell network power consumption. Section~\ref{sec:GreenCellPartialFreq} optimizes the frequency reuse parameter in addition to the BS switching off and transmit power adjustments.
  Then, Section~\ref{sec:Results} discusses the numerical results, and Section~\ref{sec:Conclusions} concludes the paper.


\section{System Model and Performance Metric} \label{sec:ModelProb}

  This section describes the random multicell deployment and BS switching off models that render their performances mathematically tractable, and it defines the network performance metric used to incorporate the power consumption behaviors of an individual BS.


\subsection{BS Switching Off and Network Coverage Reduction} \label{subsec:GreenCellModel}


%
\begin{figure}[t]
\centering
\includegraphics[width=10cm]{./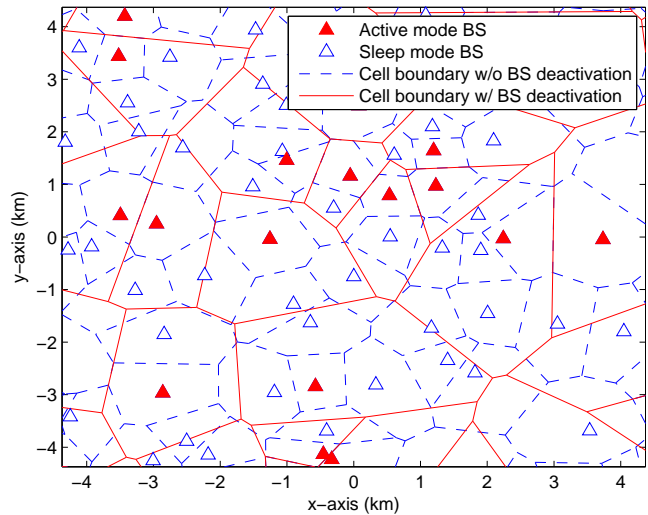}
\vspace{-0.5cm}
\caption{BS switching off and coverage in random multicell networks ($\lambda_u = 1\,\mathrm{km^{-2}}$, $\rho = 0.25$).}
\label{fig:GreenMulticellModel}
\end{figure}
%


  Each BS is in either an \emph{active} or a \emph{sleep} mode; the former means that the BS serves users within its coverage while the latter signifies that the BS does not transmit or receive signals.
  One simple method to select deactivated BSs is a random selection that determines their mode in a probabilistic manner.
  Let $\rho$ ($0 < \rho \leq 1$) denote the probability that a BS remains active.
  Each BS enters the sleep mode with probability $1-\rho$ while staying in the active mode with probability $\rho$.
  In order to avoid the coverage deficits that result from sleep mode BSs, the remaining active BSs fill in these coverage deficits through expanding their cell size.
  There are several methods to expand cell size, e.g., cooperative beamforming and transmit power increasing \cite{Peng11_PowerSavingCell, Han12_EnergyCoopCell}.
  This paper considers the scenario where the remaining active BSs expands their service coverage through increasing their transmit power $p$; thus, the power saving from the deactivated BSs and the transmit power increase of active BSs should be carefully balanced, i.e., $\rho$ that determines the BS mode must be appropriately traded off against $p$ in order to minimize the power consumption of the entire BSs while maintaining service quality above a specific level.


  In addition to the random selection of switched-off BSs, this paper considers the random BS deployment model, where BSs are randomly distributed according to a homogeneous PPP with a density of $\lambda_u$.
  This random BS model provides the cumulative distribution function (cdf) of signal to interference plus noise ratio (SINR) that is not only mathematically tractable but also tracks the performance of an actual BS deployment as accurately as the grid model \cite{Andrews11_Cellular}.
  Accordingly, it is expected that the performance of green multicell networks that randomly select switched-off BSs among randomly deployed BSs would be analytically tractable, which is the primary reason why this paper adopts these random models.
  More sophisticated methods, e.g., network adaptation accurately tracking spatial distribution of network loads \cite{Chiaraviglio08_EnergyUmts,Samdanis10_EnergySonCell,Samdanis10_EnergyDynCell,Marsan11_EnergyCoopCell,OhSon13_DynBsOnOff}, provide better performance, but they are difficult to use to quantitatively reveal the relationship among a variety of inherent properties of multicell networks, such as the BS power consumption behavior, radio propagation environment, and network-wide power consumption of BSs.
  Fig.~\ref{fig:GreenMulticellModel} presents the coverage models of a cellular network with full density $\lambda_u$ and with reduced density $\rho\lambda_u$ in random cellular networks.
  This figure depicts that the remaining active BSs expand their coverage due to the switching off of neighbor BSs.
  It is assumed that each user is served by the nearest active BS.


  In order to assess the coverage served by individual BSs more quantitatively, this paper addresses the coverage probability \cite{Andrews11_Cellular}, which is defined as the probability that a user experiences an SINR above a specified level, and this probability can be used as the key metric for expressing the network coverage constraint in minimizing the network-wide energy consumption of BSs.
  Let $\Xi$ denote a random variable that represents the SINR at which a user receives a downlink signal from the nearest BS.
  The coverage probability can be expressed as $\Pr\{ \Xi > \xi \}$ where $\xi$ is a given value that determines the minimum link quality.
  In order to address the network coverage more rigorously, the uplink signal quality as well as the downlink signal quality should be considered.
  However, in general, users select their serving BS through measuring the downlink signal quality from the BSs, e.g., the reference signal received power (RSRP), and the BS power consumption is dominated by the transmit power radiated from the BS antennas.
  For this reason, this paper only focuses on the downlink coverage probability in terms of network coverage constraints, and the consideration of uplink coverage remains as future work.
  The study in \cite{Andrews11_Cellular} derived a simple and tractable form expression for the downlink coverage probability in multicell networks using stochastic geometry, and this result encompasses the effect of various operational factors, such as the general path loss exponent, spatial density of BSs, and BS transmit power.
  Multicell networks that use the same frequency among cells suffer from low performance for cell boundary users due to severe intercell interference, e.g., $\Pr\{ \Xi > \xi \}$ is below 0.6 when $\xi = 0$dB \cite{Andrews11_Cellular}.
  A common method of resolving this intercell interference problem is to reduce the number of interfering BSs by allocating part of the entire frequency bands to each BS.
  Planned and dynamic frequency reuse methods contribute to further enhancing the performance of the cell boundary users, but their performance has primarily been evaluated through simulations.
  The study in \cite{Andrews11_Cellular} analytically quantified the performance enhancement by simply modeling this frequency reuse as a random frequency band allocation.
  This random frequency reuse provides a lower performance compared with more sophisticated methods, but its analytical performance results facilitate the mathematical formulation for designing various operations of multicell networks, e.g., \cite{Andrews11_FfrDownlink}.
  The following result provides the coverage probability in multicell networks where BSs are deployed according to a homogeneous PPP with $\lambda$ and they transmit their downlink signals with transmit power $p$ using one frequency band randomly chosen among $\beta$ frequency bands.
%
\begin{lemma}\label{lem:CoverProb}
  If one of $\beta$ frequency bands is randomly allocated to each cell, the coverage probability for the random multicell networks is given by
\begin{align}\begin{aligned}\label{eq:CoverPr}
  \psi(\lambda,p,\beta) \triangleq \Pr\{\Xi > \xi \} = \pi \lambda \int_0^\infty \exp\left( -\pi \lambda \left( 1 + \beta^{-1}\phi(\xi,\alpha) \right) x - \xi\nu^{-1} x^{\frac{\alpha}{2}} \right) dx.
\end{aligned}\end{align}
  where $\nu$ denotes the received signal to noise ratio (SNR) at unit distance, which depends on $p$ and $\beta$, and $\phi(\xi,\alpha) = \xi^{\frac{2}{\alpha}} \int_{\xi^{-\frac{2}{\alpha}}}^\infty \frac{1}{1+u^{\frac{\alpha}{2}}}\;du$.
  In particular, when $\alpha=4$, this coverage probability has the closed form expression, as follows:
\begin{align}\begin{aligned}\label{eq:CoverPrAlpha4}
  \psi(\lambda,p,\beta) = \frac{\pi^{\frac{3}{2}}\lambda \sqrt{\nu}}{\sqrt{\xi}} \exp\left( \frac{\left(\pi \lambda \left( 1+\beta^{-1}\phi(\xi,4) \right)\right)^2 \nu}{4\xi} \right) Q\left( \frac{\pi \lambda\sqrt{\nu} \left( 1+\beta^{-1}\phi(\xi,4) \right)}{\sqrt{2\xi}} \right),
\end{aligned}\end{align}
  where $\phi(\xi,4) = \sqrt{\xi} \left(\frac{\pi}{2} - \arctan\left(\frac{1}{\sqrt{\xi}}\right)\right)$ and $Q(x) \triangleq \frac{1}{\sqrt{2\pi}}\int_x^\infty \exp\left(-\frac{u^2}{2} \right) du$ denotes the $Q$-function.
\end{lemma}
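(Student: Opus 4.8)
The plan is to derive \eqref{eq:CoverPr} as a generalization of the standard stochastic-geometry coverage formula, with the frequency reuse entering through an independent thinning of the interfering field. I would model the channel with Rayleigh fading, so that the power gain $h$ from each BS is exponentially distributed with unit mean, and adopt the standard path-loss law together with nearest-BS association. The typical user (placed at the origin by Slivnyak's theorem) is then served by its nearest BS at distance $r$, whose distribution under a homogeneous PPP of density $\lambda$ has density $2\pi\lambda r\exp(-\pi\lambda r^2)$. The essential structural observation is that, conditioned on the serving distance $r$, the remaining BSs form a PPP of density $\lambda$ on the exterior $\{|x|>r\}$, and independently retaining each with probability $\beta^{-1}$ (the chance that it shares the serving BS's randomly chosen subband) thins the co-channel interferers to density $\lambda\beta^{-1}$. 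This is exactly where the reuse factor $\beta$ originates.

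First I would condition on $r$ and exploit the exponential law of $h$ for the desired link: the event $\{\Xi>\xi\}$ becomes $\{h > \xi(\sigma^2+I)r^\alpha/p\}$, whose conditional probability equals $\exp(-\xi\sigma^2 r^\alpha/p)$ times the Laplace transform $\mathcal{L}_I(\xi r^\alpha/p)$ of the interference $I$. The noise term, after absorbing constants into the unit-distance SNR $\nu$, contributes the factor $\exp(-\xi\nu^{-1}r^\alpha)$, so that the $\beta$-dependence of the thermal noise is cleanly hidden inside $\nu$. For the interference I would average over fading and apply the probability generating functional of the thinned PPP of density $\lambda\beta^{-1}$ over $\{|x|>r\}$, giving $\exp(-2\pi\lambda\beta^{-1}\int_r^\infty (1-(1+(\xi r^\alpha) v^{-\alpha})^{-1})v\,dv)$. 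A substitution normalizing $v$ by $r$ and then rescaling by $\xi^{2/\alpha}$ collapses this integral to $\pi\lambda\beta^{-1}r^2\phi(\xi,\alpha)$, with $\phi$ exactly as defined in the statement.

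Multiplying the three factors \textemdash{} fading/interference, noise, and the serving-distance density \textemdash{} and integrating over $r$, then substituting $x=r^2$ (so $r^\alpha = x^{\alpha/2}$ and $r\,dr=\tfrac12 dx$), yields \eqref{eq:CoverPr} directly. For the closed form \eqref{eq:CoverPrAlpha4} I would specialize to $\alpha=4$, where the integrand is Gaussian in $x$: completing the square in $ax^2+bx$ with $a=\xi\nu^{-1}$ and $b=\pi\lambda(1+\beta^{-1}\phi(\xi,4))$ produces $\tfrac{\sqrt\pi}{\sqrt a}\exp(b^2/4a)$ times a tail integral of $e^{-y^2}$, which I convert to the $Q$-function via $\int_c^\infty e^{-y^2}dy=\sqrt\pi\,Q(\sqrt2\,c)$. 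Substituting $a$ and $b$ gives the stated expression, and the reduction $\phi(\xi,4)=\sqrt\xi(\tfrac\pi2-\arctan(\xi^{-1/2}))$ follows from $\int_{\xi^{-1/2}}^\infty (1+u^2)^{-1}du=\tfrac\pi2-\arctan(\xi^{-1/2})$.

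The computations are individually routine, so the main obstacle is conceptual bookkeeping rather than any hard estimate: I must verify that the desired-signal link sees the full PPP of density $\lambda$ (nearest among \emph{all} BSs) while the interference sees only the $\beta^{-1}$-thinned co-channel field, so that the explicit $\beta^{-1}$ sits in front of $\phi$ in the interference term while the implicit $\beta$-dependence stays inside $\nu$ for the noise term. Getting the change-of-variables constants right so that the interference integral reproduces precisely the $\phi(\xi,\alpha)$ of the statement is the one place where a factor could easily be misplaced.
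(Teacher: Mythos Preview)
Your proposal is correct and is essentially the same approach the paper invokes: the paper defers to the proof of Theorem~4 in \cite{Andrews11_Cellular}, which is precisely the Rayleigh-fading/nearest-BS/PGFL derivation you outline (with the $\beta^{-1}$-thinning of the co-channel interferers), and then obtains \eqref{eq:CoverPrAlpha4} via the Gaussian integral identity $\int_0^\infty e^{-ax}e^{-bx^2}\,dx=\sqrt{\pi/b}\,e^{a^2/4b}\,Q(a/\sqrt{2b})$, exactly your completing-the-square step.
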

\begin{proof}
  In order to derive \eqref{eq:CoverPr}, please refer to the proof of Theorem~4 in \cite{Andrews11_Cellular}.
  \eqref{eq:CoverPrAlpha4} can also be derived by using the integration formula of $\int_0^\infty \exp\left(-ax\right) \exp\left( -bx^2 \right) dx = \sqrt{\frac{\pi}{b}} \exp\left( \frac{a^2}{4b} \right) Q\left( \frac{a}{\sqrt{2b}} \right)$, similar to the one of equation (7) in \cite{Andrews11_Cellular}.
\end{proof}
\subsection{Individual BS Power Consumption and Multicell Area Power Consumption} \label{subsec:AreaPwModel}

  A BS consists of various components for power amplifying, air conditioning, signal processing and power supply; each component has an individual power consumption behavior that may be dependent on the transmit power or bandwidth \cite{Auer10_EnergyAwareRan}.
  Several studies have built a parameterized BS power consumption model that is expressed as an affine function of the BS transmit power \cite{Auer11_EnergyWirelessNet,Holtkamp13_BsPowerModel,Auer11_CellEnergyEva}, and the work of \cite{Holtkamp13_BsPowerModel} verified that this model approximates its underlying complex model well.
  In this model, it is noteworthy that the BS power consumption in the active mode almost linearly increases with the transmit power radiated at the BS antenna and a certain level of power is consumed even in the sleep mode.
  In contrast, it takes time to reactivate the BSs that are in the sleep mode, and this delay is dependent on which functional components are deactivated in the sleep mode.
  In this sense, the power consumption in the sleep mode, which is denoted by $P_s$, depends on at which level this mode is designed under the consideration of tradeoff between the amount of power saving and the reactivation delay of the components, e.g., switching off the entire BS or only core components such as the power amplifier and RF.
  Herein, $P_s=0$ indicates an ideal BS switching off.
  Let $P_c(X_j)$ denote the instantaneous power consumption of BS $j$ located at $X_j$.
  This paper assumes that each BS has a single sector and a single RF chain.
  Therefore, $P_c(X_j)$ can be expressed as follows:
\begin{align}\label{eq:PsModel}
  P_{c}(X_j) =
  \begin{cases}
    P_a + \Delta_p p, & \mbox{if BS $j$ is in the active mode, i.e., $0 < p \leq P_{max}$}\\
    P_s, & \mbox{if BS $j$ is in the sleep mode, i.e., $p=0$}
  \end{cases},
\end{align}
  where $P_a$ denotes the standby power consumption of the active BS that results from the components with transmit power independent power consumption characteristics, e.g., air conditioning and power supply, and $P_{max}$ denotes the maximum BS transmit power.
  Note that $P_a > P_s$ as a result of the sleep mode benefit.



  In order to assess the power consumption of the entire access network, this paper considers the average power consumption of BSs per unit area, i.e., the \emph{area power consumption} (APC), as the primary performance metric for capturing the energy efficiency of multicell networks.
  When BSs are distributed according to a homogeneous PPP with density $\lambda_u$ and they enter sleep mode with probability $1-\rho$, the spatial distribution of the remaining active BSs can be modeled as the independent thinning of homogeneous PPP $\Phi$ with retention probability $\rho$ \cite{Stoyan96_StochasticGeom}.
  That is, original point process $\Phi$ can be considered as the superposition of two independent homogeneous PPPs for active and sleep mode BSs, which are denoted by $\Phi_a$ with density $\rho \lambda_u$ and $\Phi_s$ with density $(1-\rho) \lambda_u$, respectively.
  Therefore, the APC is given as follows:
\begin{align}\begin{aligned}\label{eq:DefAreaPwCons}
  & \mathbb{E}\left\{\frac{\mbox{Network power consumption}}{\mbox{Network area}}\right\} = \mathbb{E}\left\{ \frac{\sum_{X_j \in \Phi_a} P_c (X_j) + \sum_{X_j \in \Phi_s} P_c (X_j)}{\int_{X \in \mathbb{R}^2} dX} \right\} \\
  & \stackrel{\mathrm{(a)}}{=} \frac{1}{\int_{X \in \mathbb{R}^2} dX} \left( \rho\lambda_u \int_{X \in \mathbb{R}^2} \left( P_a + \Delta_p p \right)  dX  + (1-\rho)\lambda_u \int_{X \in \mathbb{R}^2} P_s dX \right) \\
  & = \lambda_u \left( \rho(P_a + \Delta_p p) + (1-\rho)P_s  \right) 
  \stackrel{\mathrm{(b)}}{=} \lambda \left( \bar{P} + \Delta_p p \right) + \lambda_u P_s,
\end{aligned}\end{align}
%
  where $\mathbb{E}\left\{Y\right\}$ denotes the expectation of random variable $Y$, (a) follows from the Campbell theorem, the stationarity of a homogeneous PPP \cite{Stoyan96_StochasticGeom}, and \eqref{eq:PsModel}, and (b) follows from the introduction of new variables, i.e., $\lambda \triangleq \lambda_u \rho$ and $\bar{P} \triangleq P_a-P_s$.
  It is worth noting that the APC in \eqref{eq:DefAreaPwCons} naturally connects the parameterized BS power consumption model given in \eqref{eq:PsModel} with the network-wide BS power consumption.
  The next sections quantitatively reveal how the BS power consumption property affects the minimization of this APC.



\section{Area Power Consumption Minimization} \label{sec:GreenCellFullFreq}

  This section formulates and solves the problem for the balance between the spatial density of active BSs and their transmit power, i.e., $\lambda$ and $p$, in terms of minimizing the APC defined in \eqref{eq:DefAreaPwCons}.
  To begin with, in order to clarify the relationship between optimal $\lambda$ and $p$, this section only considers the universal frequency reuse (UFR) case, i.e., $\beta = 1$.
  The case of $\beta > 1$ is addressed in Section~\ref{sec:GreenCellPartialFreq}.

  In formulating the problem, the definition of the service quality constraint remains ambiguous.
  This paper addresses the service quality through discriminating between two factors: \emph{network coverage and capacity}.
\begin{itemize}
  \item \textbf{Network coverage constraint}:
    The network should be able to provide users with the spatial coverage with probability not less than $\eta$, i.e., $\Pr\left\{ \Xi > \xi \right\} \geq \eta$.
    In random multicell networks, the complementary cumulative distribution function (ccdf) of SINR, i.e., the coverage probability, is given by \eqref{eq:CoverPr}, which is a function of the spatial density of the active BSs and received SNR at a unit distance.
    The received SNR at the unit distance denoted by $\nu$ depends on the BS transmit power per physical resource block (PRB). 
    Let $q$ denote the BS transmit power per PRB.
    This paper assumes that $q$ is the same for all PRBs.
    In addition, the problem concentrates on the worst-case coverage scenario where all active BSs use all available PRBs.
    Under this scenario, the total transmit power of a BS is expressed as $p = qB$, where $B$ denotes the total number of available PRBs\footnote{Practical systems require much more complex transmit power calculation due to the transmit power boosting for control channels and reference signals and the transmit power control for link adaptation. However, this paper only considers this simple transmit power model for analytical tractability.}.
    This transmit power model pursues the reduction of the \emph{possible} APC rather than an exact APC. 
    When considering that it is almost impossible to accurately estimate the instantaneous transmit power of individual BSs, it is sensible to use this simple model that facilitates the derivation of analytically tractable results.
    When using this transmit power model, the received SNR at unit distance, i.e., $\nu$ becomes $\frac{Aq}{\bar{\sigma}^2} = \frac{Ap/B}{\bar{\sigma}^2} = \frac{p}{\sigma^2}$ where $\bar{\sigma}^2$ denotes the noise power per PRB, $A$ represents the path loss at a unit distance, and $\sigma^2 \triangleq \frac{B \bar{\sigma}^2}{A}$.
    Therefore, when $\beta=1$, the coverage probability given by \eqref{eq:CoverPr}, i.e., $\psi(\lambda, p, 1)$, can be reexpressed as follows:
\begin{align}\begin{aligned}\label{eq:CoverPrPsi}
  \psi(\lambda, p, 1) \triangleq \pi \lambda \int_0^\infty \exp\left( -\pi \lambda \left( 1 + \phi(\xi,\alpha) \right) x - \xi p^{-1} \sigma^2 x^{\frac{\alpha}{2}} \right) dx.
\end{aligned}\end{align}
%

  \item \textbf{Network capacity constraint}:
    The aggregate network throughput in a wireless cellular system almost linearly increases with the BS density \cite{Andrews12_SG_KtierHetNet}.
    That is, the spatial density of BSs is a dominant factor in determining the network capacity.
    Hence, it is sensible to roughly express the network capacity required for a certain period in terms of the spatial density of BSs that can accommodate a network load.
    Herein, there are two more justifications for the consideration of the BS density as capacity requirement during off-peak period.
    The first justification is the difficulty of obtaining an accurate estimate of the amount of required network capacity.
    The network load depends significantly on the variations in the number of users, user locations, and traffic patterns  \cite{Alu08_Traffic,Auer10_EnergyAwareRan}; thus, only a rough estimation is available. 
    The second justification is that operators typically deploy sufficient BSs to accommodate the peak time traffic and the BS switching off for energy saving is primarily applied during low loads, i.e., off peak times.
    This implies that the green multicell networks focused on in this paper are coverage limited rather than capacity limited; thus, this rough expression of the network capacity is sufficient to reflect the capacity requirement in low load situations.
    Based on these justifications, this paper imposes a capacity constraint expressed as $\lambda \geq \lambda_l$ in addition to the coverage constraint.
    This network capacity constraint corresponds to $\lambda_l \mathbb{E}\{\log_2\left(1+\Xi\right)\}$ in terms of the average area spectral efficiency (ASE), or $\lambda_l \log_2(1+\xi)$ in terms of the ASE associated with the per-cell spectral efficiency that can be spatially guaranteed with probability $\eta$, when considering the above network coverage constraint\footnote{The data rate of state-of-the art communication systems, e.g., the 3GPP LTE system, can be approximated simply as an attenuated form of the Shannon bound, i.e, $a\log_2(1+\Xi)$ bps/Hz where $a$ is an attenuation factor smaller than one \cite{LTE_LinkPerformance}. This paper considers an ideal link of $a=1$.}.
\end{itemize}
%



  The APC minimization problem under these two constraints can be formulated as follows:
\begin{subequations}\label{eq:MinArPw}
\begin{align}
    \underset{\lambda,\,p}{\text{minimize}} \quad   & \lambda(\bar{P}+\Delta_p p) \label{eq:MinArPw_Obj} \\
    \text{subject to} \quad & \psi(\lambda, p, 1) \geq \eta \label{eq:MinArPw_CnstCoverage} \\
    & \lambda_l \leq \lambda \leq \lambda_u \label{eq:MinArPw_CnstLambda} \\
    & 0 < p \leq P_{max} \label{eq:MinArPw_CnstTxPw}.
\end{align}
\end{subequations}
  Note that problem \eqref{eq:MinArPw} minimizes the objective function given by \eqref{eq:MinArPw_Obj} because $\lambda_u P_s$ in \eqref{eq:DefAreaPwCons} is constant.
  The left hand side (lhs) of constraint \eqref{eq:MinArPw_CnstCoverage}, which is given in \eqref{eq:CoverPrPsi}, includes the integration and this term makes it difficult to solve problem \eqref{eq:MinArPw} and express its optimal solution as a simple form.
  The following remark provides useful insight about the above optimization problem.
\begin{remark}\label{rmk:Increasing}
  Assume that problem \eqref{eq:MinArPw} is feasible.
  Then,

  (i) if $(\lambda_1,p_1)$ and $(\lambda_2,p_2)$ belong to the feasible set of problem \eqref{eq:MinArPw} and $(\lambda_1,p_1) \preceq (\lambda_2,p_2)$\footnote{Herein, $(x_1,y_1) \preceq (x_2,y_2)$ denotes that $x_1 \leq x_2$ and $y_1 \leq y_2$.}, $\psi(\lambda_1,p_1,1) \leq \psi(\lambda_2,p_2,1)$;

  (ii) if $(\lambda^*,p^*)$ denotes the optimal solution of problem \eqref{eq:MinArPw}, $\psi(\lambda^*,p^*,1) = \eta$.
\end{remark}
\begin{proof}
  (i) The increase in the coverage probability with $\lambda$ and $p$ is obvious because the effect of the noise power deceases as $\lambda$ and $p$ increase.
  For analytical proof, see Appendix~\ref{app:proof:rmk:Increasing}.

  (ii) Assume that $\psi(\lambda^*,p^*,1) > \eta$.
  From the increasing property of $\psi(\lambda,p,1)$ with respect to $\lambda$ and $p$, there is $\tilde{p}$ such that $0< \tilde{p} < p^* < P_{max}$ and $\psi(\lambda^*,\tilde{p},1) = \eta$; thus, $(\lambda^*,\tilde{p},1)$ is also feasible.
  The APC for $(\lambda^*,\tilde{p})$ is smaller than that of $(\lambda^*,p^*)$, because objective function \eqref{eq:MinArPw_Obj} decreases with $p$ for given $\lambda$.
  This contradicts that $(\lambda^*,p^*)$ is the optimal solution. 
\end{proof}
  Define $p^*(\lambda,\beta)$ and $\lambda^*(p,\beta)$\footnote{This section only considers the case of $\beta=1$ but the definitions of $p^*(\lambda,\beta)$ and $\lambda^*(p,\beta)$ can be extended into those of $\beta \geq 1$. In these definitions, $\psi(\lambda,p,\beta)$ in \eqref{eq:CoverPr} is more definitely expressed in \eqref{eq:CoverPrPsiPfr}.} as
\begin{align}
    p^*(\lambda,\beta) \triangleq \mbox{$p$ such that } \psi(\lambda,p,\beta) = \eta \mbox{ for given $\lambda$ and $\beta$}, \label{eq:Pgiven} \\
    \lambda^*(p,\beta) \triangleq \mbox{$\lambda$ such that } \psi(\lambda,p,\beta) = \eta \mbox{ for given $p$ and $\beta$}. \label{eq:Lambdagiven}
\end{align}
%
  Then, $p^*(\lambda,1)$ (resp. $\lambda^*(p,1)$) can be easily found via the bisection method because $\psi(\lambda,p,1)$ is a monotonically increasing function with respect to $p$ for fixed $\lambda$ (resp. $\lambda$ for fixed $p$).
  Remark~\ref{rmk:Increasing} implies that when one variable is given, the other variable can be readily optimized; that is, for given $\lambda = \bar{\lambda}$ where $\lambda_l \leq \bar{\lambda} \leq \lambda_u$ (resp. for given $p = \bar{p}$ where $0 < \bar{p} \leq P_{max}$), the optimal $p$ (resp. optimal $\lambda$) becomes $p^*(\bar{\lambda},1)$ if $0 < p^*(\bar{\lambda},1) \leq P_{max}$ (resp. $\lambda^*(\bar{p},1)$ if $\lambda_l \leq \lambda^*(\bar{p},1) \leq \lambda_u$).


  However, the original problem in \eqref{eq:MinArPw} should be jointly optimized for $\lambda$ and $p$, and it is not easy to analytically express the solution of this joint optimization problem.
  Note that even when $\alpha=4$, which has a closed form expression of $\psi(\lambda, p, 1)$ given by \eqref{eq:CoverPrAlpha4}, the analytical expression for the solution of problem \eqref{eq:MinArPw} is not easily derived.
  The next subsection reformulates the nonconvex problem in \eqref{eq:MinArPw} as GP through recasting coverage constraint \eqref{eq:MinArPw_CnstCoverage} and it derives the optimal active BS density and transmit power of the reformulated problem.


\subsection{Formulation and Solution of Tightened Design Problem} \label{subsec:DesignUfr}

  The coverage probability in \eqref{eq:CoverPrPsi}, which is an integral function, needs to be expressed more elegantly in order to find a good balance between $\lambda$ and $p$.
  From the property that $\exp(-x) \geq 1 - x$ for $x \geq 0$\footnote{Let $f(x) \triangleq \exp(-x)+x-1$ for $x \geq 0$. Because $\frac{df(x)}{dx} \geq 0$ for $x \geq 0$ and $f(0)=0$, $f(x) \geq 0$ when $x \geq 0$.},
\begin{align}\begin{aligned}\label{eq:CoverPrPsiLb}
  \psi(\lambda, p, 1) & \geq \pi \lambda \int_0^\infty \exp\left( -\pi \lambda \left( 1 + \phi(\xi,\alpha) \right) x \right) \left( 1 - \xi p^{-1} \sigma^2 x^{\frac{\alpha}{2}} \right) dx \\
    & = \frac{1}{1+\phi(\xi,\alpha)} \left( 1 - \frac{\xi \sigma^2}{\left( \pi\lambda \left( 1+\phi(\xi,\alpha) \right) \right)^{\frac{\alpha}{2}} p} \Gamma\left( 1+\frac{\alpha}{2} \right) \right).
\end{aligned}\end{align}
  where $\textstyle \Gamma(x)\triangleq\int_{0}^{\infty} t^{x-1}\exp(-t)dt$ denotes the Gamma function.
  The design problem typically aims to maintain high coverage probability, i.e., $\eta$ as close to one as possible, even when some BSs are switched off.
  In this regard, the design interest is placed on a small noise scenario where the noise effect is limited, i.e., $\sigma^2 \left( \lambda^{\frac{\alpha}{2}} p \right)^{-1}$ is small for given $\xi$.
  It is worth noting that the lower bound of $\psi(\lambda, p, 1)$ in \eqref{eq:CoverPrPsiLb} gets tighter as $\sigma^2 \left( \lambda^{\frac{\alpha}{2}} p \right)^{-1}$ decreases and the equality holds when $\sigma^2 = 0$.
  In fact, $\psi(\lambda, p, 1)$ for $\sigma^2 > 0$ is upper bounded by the one for $\sigma^2 = 0$; thus, for $\sigma^2 > 0$, $\psi(\lambda, p, 1) < \frac{1}{1+\phi(\xi,\alpha)}$.
  In this sense, it is expected that the LB of \eqref{eq:CoverPrPsiLb} replaces $\psi(\lambda, p, 1)$ well.
  Note that \eqref{eq:CoverPrPsiLb} is similar to equation (9) in \cite{Andrews11_Cellular}.
  However, therein, the coverage probability was approximated for a small but nonzero noise power by applying $\exp(-x) = 1 - x + o(x)$ where $o(x)$ denotes the higher order terms of $x$, rather than being lower bounded.
  From \eqref{eq:CoverPrPsiLb}, the new coverage constraint that tightens \eqref{eq:MinArPw_CnstCoverage} is given as follows:
\begin{align}\label{eq:CnstCoverageLb}
    \lambda^{\frac{\alpha}{2}} p \geq \theta(\xi,\eta,\alpha) \triangleq \frac{\xi \sigma^2 \Gamma\left( 1+\frac{\alpha}{2} \right)}{\left( 1-\eta \left( 1+\phi(\xi,\alpha) \right) \right) \left( \pi \left( 1+\phi(\xi,\alpha) \right) \right)^{\frac{\alpha}{2}}}.
\end{align}
  The above condition follows from \eqref{eq:CoverPrPsiLb} and therefore condition \eqref{eq:CnstCoverageLb} is a sufficient condition for meeting constraint \eqref{eq:MinArPw_CnstCoverage}.
  Because $\psi(\lambda, p, 1) < \frac{1}{1+\phi(\xi,\alpha)}$ as mentioned above, $\eta$ representing coverage requirement needs to be less than $\frac{1}{1+\phi(\xi,\alpha)}$ for the sake of feasibility.
  This means that $\theta(\xi,\eta,\alpha) > 0$.
  It is worth noting that constraint \eqref{eq:CnstCoverageLb} exhibits that the BS density has the equivalent effect as the transmit power to the power of $\frac{\alpha}{2}$ in terms of the coverage probability.
  In two-dimensional Euclidean space, scaling up the BS density by a factor of $s$ is equivalent to scaling down the distance between a UE and its serving BS averagely by a factor of $s^{\frac{1}{2}}$
  ; thus, this distance reduction corresponds to increasing transmit power $p$ to $s^{\frac{\alpha}{2}} p$ in terms of the coverage probability.
  In this sense, constraint \eqref{eq:CnstCoverageLb} makes sense.

  Now, by using this new condition instead of constraint \eqref{eq:MinArPw_CnstCoverage} in problem \eqref{eq:MinArPw}, a new tightened optimization problem is yielded, as follows:
\begin{subequations}\label{eq:MinArPwTight}
\begin{align}
    \underset{\lambda,\,p}{\text{minimize}} \quad   & \lambda(\bar{P}+\Delta_p p) \label{eq:MinArPwTight_Obj} \\
    \text{subject to} \quad & \lambda^{\frac{\alpha}{2}} p \geq \theta(\xi,\eta,\alpha) \label{eq:MinArPwTight_CnstCoverage} \\
    & \lambda_l \leq \lambda \leq \lambda_u \label{eq:MinArPwTight_CnstLambda} \\
    & 0 < p \leq P_{max} \label{eq:MinArPwTight_CnstTxPw}.
\end{align}
\end{subequations}
  In order to guarantee the feasibility of problem \eqref{eq:MinArPwTight}, it is assumed that sufficient condition \eqref{eq:CnstCoverageLb} is met when all the BSs transmit with the maximum transmit power, i.e., $\lambda_u^{\frac{\alpha}{2}} P_{max} > \theta(\xi,\eta,\alpha)$.
  Note that the optimal value of problem \eqref{eq:MinArPwTight} provides the upper bound of that of problem \eqref{eq:MinArPw}.
%
\begin{proposition}\label{prop:OptPwCnsmTight}
  Assume that $\lambda_u^{\frac{\alpha}{2}} P_{max} > \theta(\xi,\eta,\alpha)$ and $\eta < \frac{1}{1+\phi(\xi,\alpha)}$.
  When $(\hat{\lambda},\hat{p})$ denotes the optimal solution of problem \eqref{eq:MinArPwTight},
\begin{align}\label{eq:OptSolTight}
  (\hat{\lambda},\hat{p}) =
  \begin{cases}
    \left( \lambda_u, \theta(\xi,\eta,\alpha) \lambda_u^{-\frac{\alpha}{2}} \right), & \hspace{-3cm}\mbox{if $0 < \bar{P} < \Delta_p(\frac{\alpha}{2}-1)\theta(\xi,\eta,\alpha) \lambda_u^{-\frac{\alpha}{2}}$ } \\
    \left( \left( \Delta_p(\frac{\alpha}{2}-1)\theta(\xi,\eta,\alpha) \bar{P}^{-1} \right)^{\frac{2}{\alpha}}, \Delta_p^{-1} (\frac{\alpha}{2}-1)^{-1}\bar{P} \right), \\
        & \hspace{-8.8cm}\mbox{if $\Delta_p(\frac{\alpha}{2}-1)\theta(\xi,\eta,\alpha) \lambda_u^{-\frac{\alpha}{2}} \leq \bar{P} < \min\left\{ \Delta_p(\frac{\alpha}{2}-1)\theta(\xi,\eta,\alpha) \lambda_l^{-\frac{\alpha}{2}},  \Delta_p \left( \frac{\alpha}{2}-1 \right) P_{max} \right\}$} \\
    \left( \left(\theta(\xi,\eta,\alpha) P_{max}^{-1}\right)^{\frac{2}{\alpha}},P_{max} \right),
        & \hspace{-3cm}\mbox{if $\bar{P} \geq \Delta_p \left( \frac{\alpha}{2}-1 \right) P_{max} $ and $\lambda_l^{\frac{\alpha}{2}} P_{max} < \theta(\xi,\eta,\alpha)$} \\
    \left( \lambda_l, \theta(\xi,\eta,\alpha) \lambda_l^{-\frac{\alpha}{2}} \right), & \hspace{-4.3cm}\mbox{if $\bar{P} \geq \Delta_p(\frac{\alpha}{2}-1)\theta(\xi,\eta,\alpha) \lambda_l^{-\frac{\alpha}{2}}$ and $\lambda_l^{\frac{\alpha}{2}} P_{max} \geq \theta(\xi,\eta,\alpha)$}.
  \end{cases}
\end{align}
\end{proposition}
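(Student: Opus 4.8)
Throughout, write $\theta$ for $\theta(\xi,\eta,\alpha)$. The plan is to collapse problem \eqref{eq:MinArPwTight} to a one-dimensional minimization and then exploit convexity. First I would note that the problem is a geometric program: the objective $\lambda(\bar{P}+\Delta_p p)=\bar{P}\lambda+\Delta_p\lambda p$ is a posynomial, while $\theta\lambda^{-\alpha/2}p^{-1}\leq 1$, $\lambda/\lambda_u\leq 1$, $\lambda_l/\lambda\leq 1$, and $p/P_{max}\leq 1$ are monomial constraints, so the logarithmic change of variables renders it convex. Next I would argue, exactly as in part (ii) of Remark~\ref{rmk:Increasing}, that the coverage constraint \eqref{eq:MinArPwTight_CnstCoverage} is tight at any optimum: for fixed $\lambda$ the objective strictly increases in $p$, so a feasible point with $\lambda^{\alpha/2}p>\theta$ is improved by lowering $p$ until $\lambda^{\alpha/2}p=\theta$. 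Hence I may substitute $p=\theta\lambda^{-\alpha/2}$ throughout.

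This substitution reduces the objective to the single-variable function
\begin{align}
f(\lambda)\triangleq\bar{P}\lambda+\Delta_p\theta\,\lambda^{1-\frac{\alpha}{2}},\nonumber
\end{align}
to be minimized over the range left by the remaining constraints. The bounds $\lambda_l\leq\lambda\leq\lambda_u$ together with the power cap $p=\theta\lambda^{-\alpha/2}\leq P_{max}$, i.e.\ $\lambda\geq(\theta/P_{max})^{2/\alpha}$, confine $\lambda$ to $[\max\{\lambda_l,(\theta/P_{max})^{2/\alpha}\},\lambda_u]$, which is nonempty by the feasibility assumption $\lambda_u^{\alpha/2}P_{max}>\theta$. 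Since $\alpha>2$ (required for $\phi(\xi,\alpha)$ to be finite), differentiation gives $f'(\lambda)=\bar{P}-\Delta_p(\tfrac{\alpha}{2}-1)\theta\,\lambda^{-\alpha/2}$, which is strictly increasing in $\lambda$; thus $f$ is convex and unimodal with a unique stationary point $\lambda^\star=\bigl(\Delta_p(\tfrac{\alpha}{2}-1)\theta\,\bar{P}^{-1}\bigr)^{2/\alpha}$ solving $f'(\lambda^\star)=0$.

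Because $f$ is convex on the line, its minimizer over the admissible interval is the projection of $\lambda^\star$ onto that interval: it equals $\lambda^\star$ when $\lambda^\star$ lies inside, and the nearer endpoint otherwise. Translating each placement of $\lambda^\star$ into an inequality on $\bar{P}$ (using that $\lambda^\star$ is decreasing in $\bar{P}$) reproduces the four cases: $\lambda^\star>\lambda_u\Leftrightarrow\bar{P}<\Delta_p(\tfrac{\alpha}{2}-1)\theta\lambda_u^{-\alpha/2}$ yields $\hat{\lambda}=\lambda_u$; $\lambda^\star<\lambda_l\Leftrightarrow\bar{P}>\Delta_p(\tfrac{\alpha}{2}-1)\theta\lambda_l^{-\alpha/2}$ and $\lambda^\star<(\theta/P_{max})^{2/\alpha}\Leftrightarrow\bar{P}>\Delta_p(\tfrac{\alpha}{2}-1)P_{max}$ yield the two left-endpoint solutions; and the complementary interval of $\bar{P}$ yields the interior solution $\lambda^\star$. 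In each case $\hat{p}=\theta\hat{\lambda}^{-\alpha/2}$ recovers the stated transmit power, e.g.\ the interior case gives $\hat{p}=\Delta_p^{-1}(\tfrac{\alpha}{2}-1)^{-1}\bar{P}$.

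The step requiring the most care is deciding which of the two candidate left endpoints, $\lambda_l$ or $(\theta/P_{max})^{2/\alpha}$, is binding; this is governed by the sign of $\lambda_l^{\alpha/2}P_{max}-\theta$ and is exactly what separates the third case (cap $p=P_{max}$ active) from the fourth (floor $\lambda=\lambda_l$ active). I would verify that the conditions $\lambda_l^{\alpha/2}P_{max}<\theta$ and $\lambda_l^{\alpha/2}P_{max}\geq\theta$ distinguish these regimes consistently with the $\min\{\cdot,\cdot\}$ appearing in the second case, thereby confirming that the four stated ranges of $\bar{P}$ are mutually exclusive and exhaustive.
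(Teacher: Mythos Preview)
Your argument is correct. The paper's own proof proceeds differently: it performs the logarithmic change of variables $\lambda^{(g)}=\log\lambda$, $p^{(g)}=\log p$ to cast \eqref{eq:MinArPwTight} as a geometric program in standard convex form, invokes Slater's condition to certify that the KKT conditions are necessary and sufficient, and then reads off $(\hat\lambda,\hat p)$ from the KKT system case by case according to which box constraints are active. You instead exploit the monotonicity of the objective in $p$ to eliminate $p$ via $p=\theta\lambda^{-\alpha/2}$ at the outset, reducing the problem to minimizing the single convex function $f(\lambda)=\bar P\lambda+\Delta_p\theta\,\lambda^{1-\alpha/2}$ over an interval and projecting the unconstrained minimizer onto that interval. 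Your route is more elementary and arguably more transparent for this two-variable problem, since it bypasses the Lagrangian machinery and makes the four-case structure emerge directly from the position of $\lambda^\star$ relative to the endpoints; the paper's KKT approach is more systematic and generalizes more readily (as it does in Appendix~\ref{app:proof:prop:OptPwCnsmPfr} for the three-variable PFR problem, where no single constraint can be substituted away so cleanly).
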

\begin{proof}
  See Appendix~\ref{app:proof:prop:OptPwCnsmTight}.
\end{proof}
  It is interesting that the operation for the spatial topology adjustment is split into four cases, according to the difference between the standby power consumption in the active mode and the power consumption in the sleep mode, i.e., $\bar{P} = P_a - P_s$ that expresses the amount of power consumption saved through the BS switching off.
  This supports that the operation depends on the effect of the BS sleep mode.
  The result of Proposition~\ref{prop:OptPwCnsmTight} can be interpreted as follows:
  \begin{itemize}
    \item[(i)] The case of $0 < \bar{P} < \Delta_p(\frac{\alpha}{2}-1)\theta(\xi,\eta,\alpha) \lambda_u^{-\frac{\alpha}{2}}$ indicates the environment where the effect of the sleep mode is not dominant and accordingly the network minimizes the power consumption through reducing the transmit power rather than through switching off BSs.
    \item[(ii)] The case of $\Delta_p(\frac{\alpha}{2}-1)\theta(\xi,\eta,\alpha) \lambda_u^{-\frac{\alpha}{2}} \leq \bar{P} < \min\left\{ \Delta_p(\frac{\alpha}{2}-1)\theta(\xi,\eta,\alpha) \lambda_l^{-\frac{\alpha}{2}},  \Delta_p \left( \frac{\alpha}{2}-1 \right) P_{max} \right\}$ denotes the instance that requires a tradeoff of the power reduction from the switched-off BSs against the increase in the transmit power of the remaining active BSs in a coverage limited scenario.
        The result reveals the interesting outcome that \emph{$\hat{p}$ is linearly proportional to $\bar{P}$ with a slope of $\frac{1}{\Delta_p \left( \frac{\alpha}{2} - 1 \right)}$} and $\hat{\lambda}$ is set to the solution of $\lambda^2 \hat{p} = \theta(\xi,\eta)$.
        This causes the setting of transmit power for balancing $\lambda$ and $p$ to be very handy.
    \item[(iii)] The case of $\bar{P} \geq \Delta_p \left( \frac{\alpha}{2}-1 \right) P_{max} $ and $\lambda_l^{\frac{\alpha}{2}} P_{max} < \theta(\xi,\eta,\alpha)$ represents the situation where the BS switching off enables a substantial energy saving effect due to a large standby consumption in the active mode, a very small sleep mode power consumption, or both.
        Hence, the density of the active BSs is maintained as small as possible.
        $\lambda_l^{\frac{\alpha}{2}} P_{max} < \theta(\xi,\eta,\alpha)$ denotes that the network is coverage limited when $\lambda$ is minimized; hence, this result determines $\lambda$ through maximizing $p$.
    \item[(iv)]The case of $\bar{P} \geq \Delta_p(\frac{\alpha}{2}-1)\theta(\xi,\eta,\alpha) \lambda_l^{-\frac{\alpha}{2}}$ and $\lambda_l^{\frac{\alpha}{2}} P_{max} \geq \theta(\xi,\eta,\alpha)$ designates that the network benefits from the BS switching off similar to that in (iii) but the active BSs are too sparse to accommodate the expected network traffic.
        Therefore, the density of the active BSs is at least equal to $\lambda_l$ and the transmit power should be set to meet the coverage constraint.
  \end{itemize}
  In summary, the first case only reduces the BS transmit power, the third and fourth cases minimize the BS density, and the second case appropriately balances the BS density and transmit power.

  The solution in \eqref{eq:OptSolTight} also reveals the effect of the radio propagation, i.e., $\alpha$, in addition to the BS power consumption behaviors.
  This $\alpha$ affects both the interval of $\bar{P}$ that determines the BS operation and the value of $(\hat{\lambda},\hat{p})$ for the APC minimization, and its effect depends on other environmental factors, e.g., the noise power and target SINR.

\subsection{Improvement of Suboptimal Solution} \label{subsec:AlgorithmUfr}

  Recall that constraint \eqref{eq:CnstCoverageLb} is the sufficient condition of original constraint \eqref{eq:MinArPw_CnstCoverage}.
  Therefore, $\psi(\hat{\lambda},\hat{p},1)$ based on $(\hat{\lambda},\hat{p})$ in \eqref{eq:OptSolTight} is at least larger than $\eta$ as long as $\sigma^2 > 0$.
  Because $\psi(\lambda,p,1)$ is monotonically increasing with respect to $p$ for given $\lambda$ from Remark~\ref{rmk:Increasing}, $p^*(\hat{\lambda},1) < \hat{p}$ where $p^*(\hat{\lambda},1)$ defined in \eqref{eq:Pgiven} can be readily obtained via the bisection method.
  Hence, $(\hat{\lambda},p^*(\hat{\lambda},1))$ is a feasible solution of original problem \eqref{eq:MinArPw} and provides the APC less than $(\hat{\lambda},\hat{p})$.
  That is, based on $(\hat{\lambda},\hat{p})$ in \eqref{eq:OptSolTight}, a better solution $(\hat{\lambda},p^*(\hat{\lambda},1))$ can be readily yielded.


\section{Frequency Reuse and Green Multicell Networks} \label{sec:GreenCellPartialFreq}

  The partial frequency reuse (PFR) that allows each BS to use only part of the spectrum is a common and useful method for mitigating intercell interference problems.
  When the entire spectrum is divided into $\beta$ frequency bands, one of which an individual BS uses, it is known that as $\beta$ increases, the outer-cell user performance is improved while the average rate of each cell may be reduced \cite{Andrews11_Cellular}.
  Meanwhile, the question about the impact of this PFR on the power consumption in multicell networks is an interesting topic.
  This section formulates and solves the spatial topology design problem for green multicell networks through considering the number of frequency bands in addition to the BS density and transmit power as design variables.

  In order to reflect the impact of $\beta \geq 1$\footnote{$\beta$ is a positive integer, but this paper allows $\beta$ to be real for analytical convenience.}, the coverage and capacity constraints addressed in Section~\ref{sec:GreenCellFullFreq} must be appropriately modified.
\begin{itemize}
  \item \textbf{Network coverage constraint}:
    The PFR does not only decrease the density of the downlink interferers by a factor of $\beta$, but it also enables the boosting of the SNR on one PRB by concentrating the BS transmit power on an available frequency band.
    This impact is described by the results in Lemma~\ref{lem:CoverProb}, and, herein, $\nu$ is set to $\frac{Ap/(B/\beta)}{\bar{\sigma}^2} = \frac{Ap\beta}{B\bar{\sigma}^2} = \frac{p\beta}{\sigma^2}$, in pursuit of the possible APC minimization justified in Section~\ref{sec:GreenCellFullFreq}.
    That is,
\begin{align}\begin{aligned}\label{eq:CoverPrPsiPfr}
  \psi(\lambda, p, \beta) = \pi \lambda \int_0^\infty \exp\left( -\pi \lambda \left( 1 + \beta^{-1}\phi(\xi,\alpha) \right) x - \xi p^{-1} \beta^{-1} \sigma^2 x^{\frac{\alpha}{2}} \right) dx.
\end{aligned}\end{align}
    $\psi(\lambda, p, \beta)$ for $\beta=1$ is equal to $\psi(\lambda, p, 1)$ in \eqref{eq:CoverPrPsi}; thus, $\psi(\lambda, p, \beta)$ encompasses the coverage probability for the UFR as well as the PFR.
  \item \textbf{Network capacity constraint}:
    For the ease of formulation, the capacity constraint is expressed in terms of the BS density, which is similar to that in Section~\ref{sec:GreenCellFullFreq}.
    Because the network capacity is proportional to the bandwidth as well as the BS density for a given SINR distribution, this constraint can be simply cast into $\frac{\lambda}{\beta} \geq \lambda_l$, where $\beta$ in the denominator reflects the reduction of the available bandwidth due to the PFR.
    However, this model tends to underestimate the average ASE given by $\frac{\lambda}{\beta} \mathbb{E}\{\log_2\left(1+\Xi\right)\}$, where the distribution of $\Xi$ is also improved as $\beta$ increases.
    In contrast, it abstracts the performance well in terms of the ASE associated with the per-cell spectral efficiency that can be spatially guaranteed with probability $\eta$, i.e, $\frac{\lambda}{\beta} \log_2(1+\xi)$ for specific constant $\eta$.
    Considering the difficulty of its accurate estimation and the bias to the coverage limited scenario, as described in Section~\ref{sec:GreenCellFullFreq}, this simple capacity constraint is imposed on the design problem for green cellular networks.
\end{itemize}
%

\subsection{Formulation and Solution of Tightened or Approximated Design Problems} \label{subsec:DesignPfr}

  In order to solve the design problem, the coverage constraint based on coverage probability \eqref{eq:CoverPrPsiPfr} needs to be recast into a simpler form.
  Similar to Section~\ref{sec:GreenCellFullFreq}, the coverage probability in \eqref{eq:CoverPrPsiPfr} can be lower bounded by using $\exp(-x) \geq 1 - x$, as follows:
\begin{align}\begin{aligned}\label{eq:CoverPrPsiE}
  \psi(\lambda, p, \beta) \geq \psi_e(\lambda, p, \beta) \triangleq \frac{1}{1+\beta^{-1}\phi(\xi,\alpha)} \left( 1 - \frac{\xi \sigma^2}{\left( \lambda\pi \left( 1+\beta^{-1}\phi(\xi,\alpha) \right) \right)^{\frac{\alpha}{2}} p \beta} \Gamma\left( 1+\frac{\alpha}{2} \right) \right)
\end{aligned}\end{align}
  However, \eqref{eq:CoverPrPsiE} does not provide the relationship among $\lambda$, $p$, and $\beta$ that facilitates the convexification of the design problem.
  For this reason, another LB or approximation of the coverage probability for PFR is considered based on the LB of \eqref{eq:CoverPrPsiE}, as follows:
\begin{align}\begin{aligned}\label{eq:CoverPrPsiBeta}
  \psi(\lambda, p, \beta) \geq \psi_\beta(\lambda, p, \beta) \triangleq \frac{1}{1+\beta^{-1}\phi(\xi,\alpha)} \left( 1 - \frac{\xi \sigma^2}{\left( \lambda\pi \right)^{\frac{\alpha}{2}} p \beta} \Gamma\left( 1+\frac{\alpha}{2} \right) \right),
\end{aligned}\end{align}
\begin{align}\begin{aligned}\label{eq:CoverPrPsiA}
  \psi(\lambda, p, \beta) \approx \psi_a(\lambda, p, \beta) \triangleq \frac{1}{1+\beta^{-1}\phi(\xi,\alpha)} \left( 1 - \frac{\xi \sigma^2}{\left( \lambda\pi \left( 1+\phi(\xi,\alpha) \right) \right)^{\frac{\alpha}{2}} p \beta} \Gamma\left( 1+\frac{\alpha}{2} \right) \right).
\end{aligned}\end{align}

  In contrast, only when $\alpha=4$, the LB that enables to convexify the design problem can also be obtained from $Q(x) > \frac{1}{\sqrt{2\pi}} \left( \frac{x}{1+x^2} \right) \exp \left( -\frac{x^2}{2} \right)$, as follows:
\begin{align}\begin{aligned}\label{eq:CoverPrPsiQ}
  \psi(\lambda,p,\beta) & = \frac{\pi^{\frac{3}{2}}\lambda \sqrt{p \beta}}{\sqrt{\xi \sigma^2}} \exp\left( \frac{\left(\pi \lambda \left( 1+\beta^{-1}\phi(\xi,4) \right)\right)^2 p \beta}{4\xi \sigma^2} \right) Q\left( \frac{\pi \lambda\sqrt{p \beta} \left( 1+\beta^{-1}\phi(\xi,4) \right)}{\sqrt{2\xi \sigma^2}} \right) \\
  & > \psi_q(\lambda, p, \beta)
    \triangleq \frac{1}{1 + \beta^{-1}\phi(\xi,4)} \left( 1 - \frac{ 2\xi \sigma^2   }{2\xi \sigma^2 + \left( \lambda \pi \left( 1+\beta^{-1}\phi(\xi,4) \right) \right)^2 p \beta} \right).
\end{aligned}\end{align}
  Note that $\psi(\lambda,p,\beta) < \frac{1}{1 + \beta^{-1}\phi(\xi,4)}$ from $Q(x) < \frac{1}{\sqrt{2\pi}} \left( \frac{1}{x} \right) \exp \left( -\frac{x^2}{2} \right)$, and this upper bound is equal to $\psi(\lambda,p,\beta)$ for $\sigma^2 = 0$.
  In addition, the LB in \eqref{eq:CoverPrPsiQ} becomes tighter and eventually approaches $\frac{1}{1 + \beta^{-1}\phi(\xi,4)}$ as $\frac{\lambda^2 p}{\sigma^2}$ increases. 
  In this regard, $\psi_q(\lambda,p,\beta)$ in \eqref{eq:CoverPrPsiQ} approximates the original coverage probability given by $\eqref{eq:CoverPrAlpha4}$ very well for $\eta$ close to one.
  Moreover, it can be shown that $\psi_q(\lambda,p,\beta) > \psi_e(\lambda,p,\beta)$ when $\alpha=4$, by comparing \eqref{eq:CoverPrPsiQ} with \eqref{eq:CoverPrPsiE}.

  In summary, the four methods for recasting the original coverage probability were introduced, and it is worth noting that the latter three equations, i.e., $\psi_\beta$, $\psi_a$, and $\psi_q$, enables to convexify the design problem.
  In addition, $\psi_\beta$ and $\psi_q$ are lower bounds for $\psi$ while $\psi_a$ is not.
  Their relations are summarized in Table~\ref{tbl:CoverProbRel}.
\begin{table}[t]
\caption{Relationship among functions associated to coverage probability.} \centering
\label{tbl:CoverProbRel}
    \begin{tabular}{|c|c|}
    \hline
        \textbf{Scenario} & \textbf{Relationship}
    \\ \hline \hline
        $\beta = 1$, $\alpha > 2$ & $\psi_\beta < \psi_e = \psi_a < \psi$
    \\ \hline
        $\beta = 1$, $\alpha = 4$ & $\psi_\beta < \psi_e = \psi_a < \psi_q < \psi$
    \\ \hline
        $\beta > 1$, $\alpha > 2$ & $\psi_\beta < \psi_e < \min\left\{ \psi_a,\,\psi \right\}$
    \\ \hline
        $\beta > 1$, $\alpha = 4$ & $\psi_\beta < \psi_e < \psi_q < \psi$
    \\ \hline
    \end{tabular}
\vspace{-0.5cm}
\end{table}
%

  New coverage constraints based on $\psi_\beta$, $\psi_a$, and $\psi_q$ for replacing original coverage probability \eqref{eq:CoverPrPsiPfr} can be reexpressed as the following common form:
\begin{align}\label{eq:CnstCoverageSuffFrPosy}
    c_0 \lambda^{-\frac{\alpha}{2}} p^{-1} \beta^{-1} + c_1 \beta^{-1} + c_2 \beta^{-2} \leq 1.
\end{align}
  where $c_0$, $c_1$, and $c_2$ are nonnegative constants and given as follows:
\begin{align}
    & \mbox{For $\psi_\beta$ with $\alpha>2$: } \,\,
        c_0 = \frac{\xi \sigma^2 \Gamma\left( 1+\frac{\alpha}{2} \right)}{\pi^{\frac{\alpha}{2}}(1 - \eta)}, \,\,
        c_1 = \frac{\eta}{1 - \eta}\phi(\xi,\alpha), \,\,
        c_2 = 0. \label{eq:PfrPosyCoeffBeta} \\
    & \mbox{For $\psi_a$ with $\alpha>2$: } \,\,
        c_0 = \frac{\xi \sigma^2 \Gamma\left( 1+\frac{\alpha}{2} \right)}{\pi^{\frac{\alpha}{2}}(1 - \eta)\left( 1+\phi(\xi,\alpha) \right)^{\frac{\alpha}{2}}}, \,\,
        c_1 = \frac{\eta}{1 - \eta}\phi(\xi,\alpha), \,\,
        c_2 = 0. \label{eq:PfrPosyCoeffA} \\
    & \mbox{For $\psi_q$ with $\alpha=4$: } \,\,
        c_0 = \frac{2\xi \sigma^2 \eta}{\pi^2(1 - \eta)}, \,\,
        c_1 = \frac{2\eta-1}{1 - \eta}\phi(\xi,4), \,\,
        c_2 = \frac{\eta}{1 - \eta}\phi^2(\xi,4). \label{eq:PfrPosyCoeffQ}
\end{align}
  It is sensible to assume that $\eta > \frac{1}{2}$, because $\eta$ close to one is considered in this design problem.
  This assumption guarantees that all of $c_0$, $c_1$, and $c_2$ in \eqref{eq:PfrPosyCoeffBeta}, \eqref{eq:PfrPosyCoeffA}, and \eqref{eq:PfrPosyCoeffQ} are positive.

  In problem \eqref{eq:MinArPw} for $\beta=1$, i.e., the UFR, by imposing new coverage constraint \eqref{eq:CnstCoverageSuffFrPosy} and capacity constraint $\beta \lambda_l \leq \lambda \leq \lambda_u$ instead of constraints \eqref{eq:MinArPw_CnstCoverage} and \eqref{eq:MinArPw_CnstLambda}, the design problem for the PFR, i.e., $\beta \geq 1$, can be formulated as follows:
\begin{subequations}\label{eq:MinArPwTightPfrGeom}
\begin{align}
    \underset{\lambda>0,\,p>0,\,\beta>0}{\text{minimize}} \quad   & \lambda(\bar{P}+\Delta_p p) \label{eq:MinArPwTightPfr_Obj} \\
    \text{subject to} \quad & c_0  \lambda^{-\frac{\alpha}{2}} p^{-1} \beta^{-1} + c_1 \beta^{-1} + c_2 \beta^{-2} \leq 1 \label{eq:MinArPwTightPfr_CnstCoverage} \\
    & \lambda_l \lambda^{-1}\beta \leq 1 \label{eq:MinArPwTightPfr_CnstLambdaL} \\
    & \lambda_u^{-1} \lambda \leq 1 \label{eq:MinArPwTightPfr_CnstLambdaU} \\
    & p P^{-1}_{max} \leq 1 \label{eq:MinArPwTightPfr_CnstTxPw} \\
    & \beta^{-1} \leq 1 \label{eq:MinArPwTightPfr_CnstDelta}.
\end{align}
\end{subequations}
  Note that objective function \eqref{eq:MinArPwTightPfr_Obj} and constraint \eqref{eq:MinArPwTightPfr_CnstCoverage} are posynomial functions.
  That is, problem \eqref{eq:MinArPwTightPfrGeom} is GP in posynomial form \cite{Boyd07_CvxOpt}.
  GP can be transformed to a convex problem, which can be solved by the interior-point method \cite{Boyd07_CvxOpt} or available solver \cite{Boyd12_CvxTool}.
  However, the solution based on these iterative methods does not reveal the analytical relationship among the design parameters such as $\lambda$, $p$, and $\beta$.
  In this regard, the necessary conditions for the optimal solution of problem \eqref{eq:MinArPwTightPfrGeom} are derived.

\begin{proposition}\label{prop:OptPwCnsmPfr}
  Assume that $\eta > \frac{1}{2}$, and let $( \breve{\lambda},\breve{p},\breve{\beta} )$ denote the optimal solution of problem \eqref{eq:MinArPwTightPfrGeom}.
  These optimal variables are related as follows:
\begin{align}\begin{aligned}\label{eq:OptPwCnsmPfr}
    \breve{\lambda} = \lambda_l \breve{\beta}, \hspace{1cm}
    f(\breve{\lambda},\breve{p},\breve{\beta}) = 1,
\end{aligned}\end{align}
  where $f(\lambda,p,\beta) \triangleq c_0  \lambda^{-\frac{\alpha}{2}} p^{-1} \beta^{-1} + c_1 \beta^{-1} + c_2 \beta^{-2}$.
  To be more specific, $( \breve{\lambda},\breve{p},\breve{\beta} )$ satisfies one of the following conditions:
\begin{itemize}
    \item[(i)] $\breve{\lambda} = \lambda_u$, $\breve{\beta} = \frac{\lambda_u}{\lambda_l}$, $\breve{p} = \frac{c_0 \lambda_u^{-\frac{\alpha}{2}} \left( \lambda_u / \lambda_l \right)}{\left( \lambda_u / \lambda_l \right)^2 - c_1 \left( \lambda_u / \lambda_l \right) - c_2}$.
    \item[(ii)] $\breve{\lambda} = \lambda_l \bar{\beta}$, $\breve{p} = \frac{c_0 \lambda_l^{-\frac{\alpha}{2}} \bar{\beta}^{-\frac{\alpha}{2}+1}}{\bar{\beta}^2 - c_1\bar{\beta} - c_2}$, $\breve{\beta} = \bar{\beta}$, where $\bar{\beta} \in \left\{ \beta | \, g(\beta) = 0,\,\beta > 1 \right\}$.
    \item[(iii)] $\breve{\lambda} = \lambda_l \bar{\beta}$, $\breve{p} = P_{max}$, $\breve{\beta} = \bar{\beta}$, where $\bar{\beta} \in \left\{ \beta | \, f(\lambda_l \beta,P_{max},\beta) = 1,\,\beta > 1 \right\}$.
    \item[(iv)] $\breve{\lambda} = \lambda_l$, $\breve{\beta} = 1$, $\breve{p} = \frac{c_0 \lambda_l^{-\frac{\alpha}{2}}}{1 - c_1 - c_2}$.
\end{itemize}
  In (ii), $g(\beta)$ is given by $g_\alpha(\beta)$ with coefficients \eqref{eq:PfrPosyCoeffBeta} for $\psi_\beta$ in \eqref{eq:CoverPrPsiBeta} (resp. \eqref{eq:PfrPosyCoeffA} for $\psi_a$ in \eqref{eq:CoverPrPsiA}) when $\alpha>2$, whereas it can also be given by $g_4(\beta)$ with coefficients \eqref{eq:PfrPosyCoeffQ} for $\psi_q$ in \eqref{eq:CoverPrPsiQ} only when $\alpha=4$, where
\begin{align}
    & g_\alpha(\beta) \triangleq \beta^{\frac{\alpha}{2}+2} - 2c_1 \beta^{\frac{\alpha}{2}+1} + c_1^2 \beta^{\frac{\alpha}{2}} - \frac{c_0 \Delta_p \lambda_l^{-\frac{\alpha}{2}} \bar{P}^{-1} \alpha}{2} \beta + c_0 c_1 \Delta_p \lambda_l^{-\frac{\alpha}{2}} \bar{P}^{-1} \left( \frac{\alpha}{2}-1 \right), \label{eq:BetaEqAlphaGen} \\
    & g_4(\beta) \triangleq \beta^4 - 2c_1 \beta^3 + \left( c_1^2 - 2c_2 \right) \beta^2 + 2 \left( c_1 c_2 - c_0 \Delta_p \lambda_l^{-2} \bar{P}^{-1} \right) \beta + \left( c_2^2 + c_0 c_1 \Delta_p \lambda_l^{-2} \bar{P}^{-1} \right). \label{eq:BetaEqAlpha4}
\end{align}
\end{proposition}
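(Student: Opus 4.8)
The plan is to treat problem \eqref{eq:MinArPwTightPfrGeom} as the GP it is and to read off the relations in \eqref{eq:OptPwCnsmPfr} from its Karush--Kuhn--Tucker (KKT) conditions. First I would perform the logarithmic change of variables $u=\ln\lambda$, $v=\ln p$, $w=\ln\beta$, under which the objective \eqref{eq:MinArPwTightPfr_Obj} and all of \eqref{eq:MinArPwTightPfr_CnstCoverage}--\eqref{eq:MinArPwTightPfr_CnstDelta} become convex; since the problem is feasible by assumption and its feasible set has nonempty interior, Slater's condition holds and the KKT conditions are necessary at the optimum $(\breve\lambda,\breve p,\breve\beta)$ (and, by convexity, characterize the global optimum). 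I would attach multipliers $\mu_f,\mu_1,\mu_2,\mu_3,\mu_4\ge0$ to the coverage, capacity, density, power, and reuse constraints, respectively, and write the three stationarity equations in $u$, $v$, $w$.

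The two universal relations in \eqref{eq:OptPwCnsmPfr} then fall out directly. The $v$-equation reads $\lambda\Delta_p p-\mu_f T_0+\mu_3=0$ with $T_0\triangleq c_0\lambda^{-\alpha/2}p^{-1}\beta^{-1}>0$, so $\mu_f T_0=\lambda\Delta_p p+\mu_3>0$ forces $\mu_f>0$, and complementary slackness makes the coverage constraint active, i.e. $f(\breve\lambda,\breve p,\breve\beta)=1$. Because the objective is independent of $\beta$, the $w$-equation is a pure multiplier balance, $\mu_1=\mu_f(T_0+c_1\beta^{-1}+2c_2\beta^{-2})+\mu_4$; since $\mu_f>0$ and $T_0>0$ (with $c_1,c_2\ge0$), this yields $\mu_1>0$, so the capacity constraint is active, i.e. $\breve\lambda=\lambda_l\breve\beta$. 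The intuition is that increasing $\beta$ only relaxes coverage at no direct cost to the objective, so it would be driven up freely were it not pinned by the capacity bound.

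With these two active relations I would eliminate $\lambda$ and $p$: setting $\lambda=\lambda_l\beta$ and solving $f=1$ gives $p=\frac{c_0\lambda_l^{-\alpha/2}\beta^{-\frac\alpha2+1}}{\beta^2-c_1\beta-c_2}$, where the denominator is positive throughout the feasible region (this is exactly the attainability of the coverage target, and $\eta>\frac12$ guarantees $c_0,c_1,c_2>0$). Substituting both into \eqref{eq:MinArPwTightPfr_Obj} produces the one-variable objective $\Theta(\beta)=\lambda_l\bar P\beta+\Delta_p c_0\lambda_l^{1-\frac\alpha2}\beta^{2-\frac\alpha2}/(\beta^2-c_1\beta-c_2)$, to be minimized over the interval cut out by $\beta\ge1$, $\lambda_l\beta\le\lambda_u$, and $p(\beta)\le P_{max}$. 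The minimizer is either an interior stationary point or one of the three boundary points, giving precisely the four cases: $\Theta'(\beta)=0$ reduces to the algebraic equation \eqref{eq:BetaEqAlphaGen} (and, when $\alpha=4$ with $c_2\ne0$ for $\psi_q$, to \eqref{eq:BetaEqAlpha4}), which is case (ii); the boundary $\lambda_l\beta=\lambda_u$ gives case (i), $p(\beta)=P_{max}$ gives case (iii), and $\beta=1$ gives case (iv). The stated closed forms for $\breve p$ in each case follow by back-substituting the corresponding $\breve\beta$ into the expression for $p$ above.

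The main obstacle I anticipate is the algebraic reduction of $\Theta'(\beta)=0$ to the compact forms \eqref{eq:BetaEqAlphaGen} and \eqref{eq:BetaEqAlpha4}: one must differentiate the rational term, clear $(\beta^2-c_1\beta-c_2)^2$ and the factor $\beta^{2-\frac\alpha2}$, and recognize the grouping $\beta^{\frac\alpha2+2}-2c_1\beta^{\frac\alpha2+1}+c_1^2\beta^{\frac\alpha2}=\beta^{\frac\alpha2}(\beta-c_1)^2$, all while keeping the $c_2$-dependent terms separate so that the $c_2=0$ cases $\psi_\beta,\psi_a$ and the $\alpha=4$ case $\psi_q$ are covered by one derivation. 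A secondary point needing care is verifying that the four case conditions are mutually exclusive and exhaustive over the feasible $\beta$-interval, and that the interior root selected by case (ii) is a minimizer rather than a maximizer; the latter follows from the sign behavior of $\Theta$ at the endpoints of the interval.
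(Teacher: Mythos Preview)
Your proposal is correct and follows essentially the same approach as the paper: convert the GP to convex form via the logarithmic change of variables, use the $p$-stationarity condition to force $\mu_f>0$ (coverage active), then the $\beta$-stationarity condition to force $\mu_1>0$ (capacity active), giving \eqref{eq:OptPwCnsmPfr}; the four cases then arise from which of the remaining box constraints is binding. The only minor difference is in how case~(ii) is handled: the paper stays within the multivariable KKT system, setting $\tau_{\lambda u}=\tau_p=\tau_\beta=0$ and solving the remaining stationarity equations directly, whereas you substitute the two active relations to reduce to the single-variable objective $\Theta(\beta)$ and set $\Theta'(\beta)=0$. These are equivalent (the interior critical point of $\Theta$ is exactly the KKT point with the inactive multipliers zeroed), and your route arguably makes the algebraic origin of \eqref{eq:BetaEqAlphaGen}--\eqref{eq:BetaEqAlpha4} more transparent. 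Note also that the proposition only asserts \emph{necessary} conditions, so your concerns about exhaustiveness and about the interior root being a minimizer are not actually needed for the statement as written.
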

\begin{proof}
  See Appendix~\ref{app:proof:prop:OptPwCnsmPfr}.
\end{proof}
  The interpretation of the necessary conditions (i)--(iv) in Proposition~\ref{prop:OptPwCnsmPfr} corresponds to that of the four cases in \eqref{eq:OptSolTight} for the UFR, even though the results for the PFR do not explicitly offer the conditions that are expressed in terms of the interval of $\bar{P}$.
  One notable difference is regarding the additional variable $\beta$, and this variable is conditioned on $\breve{\lambda} = \lambda_l \breve{\beta}$, like \eqref{eq:OptPwCnsmPfr}.
  Note that the density of the interfering BSs is expressed as $\frac{\lambda}{\beta}$ in the PFR.
  Therefore, condition $\breve{\lambda} = \lambda_l \breve{\beta}$ signifies that \emph{the density of the interfering BSs is always maintained as a fixed value of $\lambda_l$} in terms of minimizing the APC in problem \eqref{eq:MinArPwTightPfrGeom}.
  Conditions (i), (iii), and (iv) in Proposition~\ref{prop:OptPwCnsmPfr} correspond to the extreme cases where one of constraints \eqref{eq:MinArPwTightPfr_CnstLambdaU}, \eqref{eq:MinArPwTightPfr_CnstTxPw}, and \eqref{eq:MinArPwTightPfr_CnstDelta} is active.
  In contrast, condition (ii) is relevant to the balance $\lambda$, $p$, and $\beta$, and the selection of these design variables begins with solving $g_\alpha(\beta) = 0$ in \eqref{eq:BetaEqAlphaGen} or $g_4(\beta) = 0$ in \eqref{eq:BetaEqAlpha4}.

\subsection{Derivation of Suboptimal Solutions} \label{subsec:AlgorithmPfr}

  The candidate solution for problem \eqref{eq:MinArPwTightPfrGeom} can be obtained through comparing the APC for $(\lambda,p,\beta)$ values derived from conditions (i)--(iv) in Proposition~\ref{prop:OptPwCnsmPfr} and then choosing the feasible one with the least APC among them.
  With a slight misuse of notation, this candidate solution is denoted as $(\breve{\lambda}, \breve{p}, \breve{\beta})$.
  Recall that problem \eqref{eq:MinArPwTightPfrGeom} uses the LB or approximation instead of the original coverage probability for its coverage constraint.
  Furthermore, it can be readily proven that the two properties in Remark~\ref{rmk:Increasing} also hold for $\psi(\lambda,p,\beta)$ in \eqref{eq:CoverPrPsiPfr} by replacing $\beta=1$ with given $\beta \geq 1$.
  In this sense, similar to the method presented in Section~\ref{subsec:AlgorithmUfr}, in order to return to the original coverage constraint, new solution $( \breve{\lambda},p^*(\breve{\lambda},\breve{\beta}),\breve{\beta} )$ is derived from candidate solution $(\breve{\lambda}, \breve{p}, \breve{\beta})$, where $p^*(\breve{\lambda},\breve{\beta})$ is defined in \eqref{eq:Pgiven}.
  For the design problem based on the LBs of $\psi$, i.e., $\psi_\beta$ and $\psi_q$, this new solution reduces the APC compared with that from $( \breve{\lambda},\breve{p},\breve{\beta} )$.
  In contrast, for the design problem based on the approximation of $\psi$, i.e., $\psi_a$, this method renders the solution feasible if $\psi(\breve{\lambda},\breve{p},\breve{\beta}) < \eta$ while decreasing the APC if $\psi(\breve{\lambda},\breve{p},\breve{\beta}) > \eta$.

\subsection{Design of $\lambda$ and $p$ for Given $\beta$} \label{subsec:DesignPfrGvnBeta}

  Unlike the previous subsections that considered $\beta$ as a design variable in addition to $\lambda$ and $p$, one can consider the optimization of $\lambda$ and $p$ for given $\beta \geq 1$.
  Given $\beta$, the design problem can be formulated by replacing $\psi(\lambda, p, 1) \geq \eta$ and $\lambda_l \leq \lambda \leq \lambda_u$ with $\psi(\lambda, p, \beta) \geq \eta$ and $\beta \lambda_l \leq \lambda \leq \lambda_u$, respectively, in problem \eqref{eq:MinArPw}.
  From \eqref{eq:CoverPrPsiE}, the coverage constraint in integral form, i.e., $\psi(\lambda, p, \beta) \geq \eta$ for given $\beta$, can be tightened by the following constraint in monomial form:
\begin{align}\label{eq:CnstCoverageLbGvnBeta}
    \lambda^{\frac{\alpha}{2}} p \geq \vartheta(\beta,\xi,\eta,\alpha) \triangleq \frac{\beta^{-1} \xi \sigma^2 \Gamma\left( 1+\frac{\alpha}{2} \right)}{\left( 1-\eta \left( 1+\beta^{-1}\phi(\xi,\alpha) \right) \right) \left( \pi \left( 1+\beta^{-1}\phi(\xi,\alpha) \right) \right)^{\frac{\alpha}{2}}}.
\end{align}
  As a result, the optimal solution of the tightened design problem for given $\beta$, i.e., $(\hat{\lambda}, \hat{p})$, can be obtained from Proposition~\ref{prop:OptPwCnsmTight} by replacing $\theta$ and $\lambda_l$ with $\vartheta$ and $\beta\lambda_l$, respectively.
  In addition, because $\psi(\lambda,p,\beta)$ in \eqref{eq:CoverPrPsiPfr} monotonically increases with respect to $p$ for given $\lambda$ and $\beta$ as addressed in Section~\ref{subsec:AlgorithmPfr}, $(\hat{\lambda}, p^*(\hat{\lambda},\beta))$ brings the APC less than that of $(\hat{\lambda}, \hat{p})$. 

\begin{table}[t]
\caption{System parameters.} \centering
\label{tbl:Parameters}
    \begin{tabular}{|c||c|c|}
    \hline
        \textbf{Variable} & \textbf{Description} & \textbf{Value}
    \\ \hline 
        $\lambda_u$ & Total density of deployed BSs & $1\,\mathrm{km}^{-2}$
    \\ 
        $\lambda_l$ & Minimum required BS density & $0.2\,\mathrm{km}^{-2}$
    \\ 
        $P_{max}$ & Maximum BS transmit power & $49\,\mathrm{dBm}$
    \\ 
        $P_{a}$ & Standby power consumption of an active BS & $185\,\mathrm{watts}$
    \\ 
        $\Delta_{p}$ & Slope of affine BS power consumption model & $4.7$
    \\ 
        $A$ & Path loss at unit distance $1\,\mathrm{km}$ & $-128.1\,\mathrm{dB}$
    \\ 
        $B\bar{\sigma}^2$ & Total noise power over $20\,\mathrm{MHz}$ bandwidth & $-100.99\,\mathrm{dBm}$
    \\ 
        $\sigma^2$ & Normalized noise power over total bandwidth ($B\bar{\sigma}^2/A$) & $-27.11\,\mathrm{dB}$
    \\ 
        $\alpha$ & Path loss exponent & 4 or 5
    \\ 
        $\xi$ & Minimum required SINR & $-6\,\mathrm{dB}$
    \\ \hline
    \end{tabular}
\vspace{-0.5cm}
\end{table}
%

\section{Numerical Results and Discussion} \label{sec:Results}

  This section evaluates the APC performance of the green multicell networks designed in Sections~\ref{sec:GreenCellFullFreq} and \ref{sec:GreenCellPartialFreq}, and it examines the effect of the affine BS power consumption model given by \eqref{eq:PsModel}.
  The system parameters for the performance evaluation are presented in Table~\ref{tbl:Parameters}, and these parameter values are used to obtain results in this section, unless stated otherwise.
  The parameter values for the BS power consumption and path loss models are taken from \cite{Holtkamp13_MinBsPwc,3GPP_TR36814}.

  Figs.~\ref{fig:ResultPrCvrgSinr} and \ref{fig:ResultContour} examine the coverage probability for the design problems proposed in Sections~\ref{sec:GreenCellFullFreq} and \ref{sec:GreenCellPartialFreq}, prior to evaluating the APC performance of multicell networks.

\begin{figure*}[!t]
\centerline{
    \subfigure[BS density: $\lambda = 0.2\mathrm{km}^{-2}$]{\includegraphics[width=9cm]{./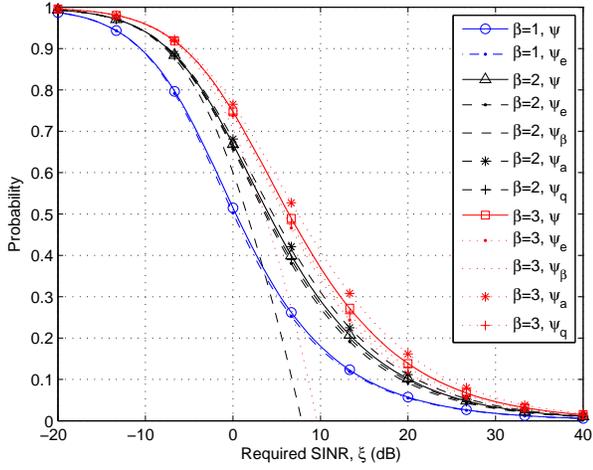}
    \label{fig:ResultPrCvrgSinrLowDensity}}
\hfil
    \subfigure[BS density: $\lambda = 0.5\mathrm{km}^{-2}$]{\includegraphics[width=9cm]{./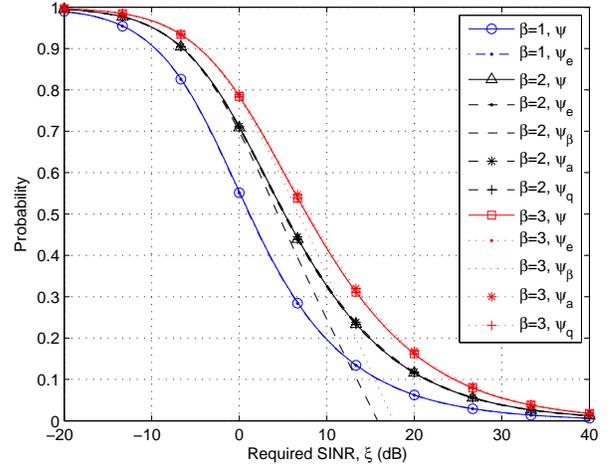}
    \label{ResultPrCvrgSinrHighDensity}}}
    \caption{Coverage probability vs. required SINR ($\alpha=4$, $p=39\mathrm{dBm}$; in symbolled-solid lines for $\psi$, symbols are yielded from analytical results while solid lines are collected from experimental results).}
\label{fig:ResultPrCvrgSinr}
\vspace{-0.5cm}
\end{figure*}
  Fig.~\ref{fig:ResultPrCvrgSinr} compares the coverage probability and its alternatives when 20\% and 50\% of the total BSs are functioning.
  In this evaluation, each BS sets its transmit power to be 10\% of $P_{max}$.
  For the exact coverage probability, i.e., $\psi$, it is observed that the analysis results coincide precisely with the empirical results; thus, the correctness of all analytical results is validated.
  This figure also demonstrates that the coverage probability is quite well approximated using its alternatives, which become closer for higher BS densities.
  This results from the noise power becoming increasingly overwhelmed by the aggregate interference as the BS density increases, and $\psi_e$, $\psi_{\beta}$, $\psi_q$, and $\psi_a$ provide better lower bounds and a better approximation for smaller $\sigma^2$ values.
  More specifically, for $\beta>1$, $\psi_e$, $\psi_q$, and $\psi_a$ tightly approximate $\psi$ for the entire range of coverage probability, while $\psi_{\beta}$ only approximates well for high coverage probabilities.
  However, when considering that systems typically support high coverage probabilities, e.g., above 0.8 or 0.9, $\psi_{\beta}$ also remains a tight LB of $\psi$.

\begin{figure*}[!t]
\centerline{
    \subfigure[Path loss exponent: $\alpha = 4$]{\includegraphics[width=9cm]{./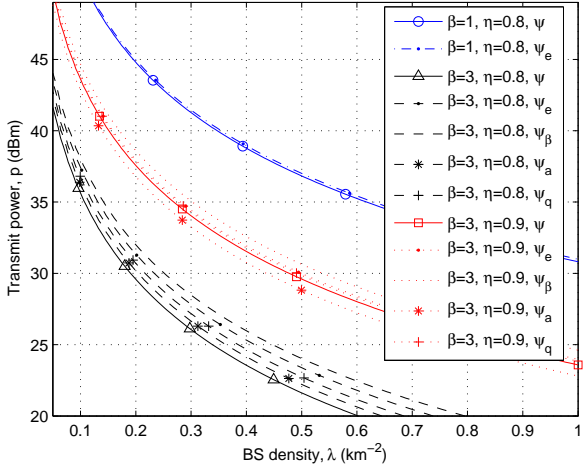}
    \label{fig:ResultContourAlpha4}}
\hfil
    \subfigure[Path loss exponent: $\alpha = 5$]{\includegraphics[width=9cm]{./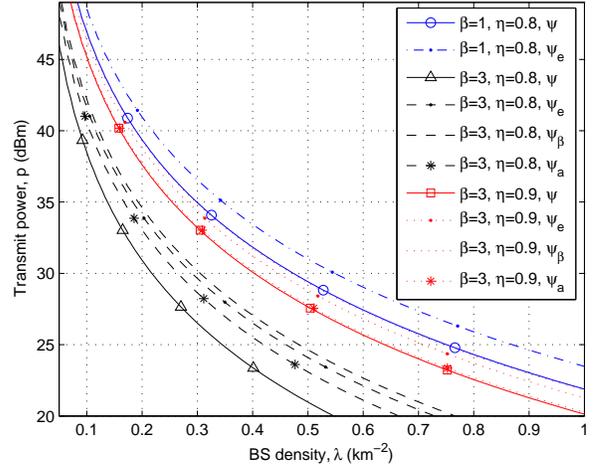}
    \label{fig:ResultContourAlpha5}}}
    \caption{Contour graph for coverage constraints.}
\label{fig:ResultContour}
\vspace{-0.5cm}
\end{figure*}

  Fig.~\ref{fig:ResultContour} presents the contour graphs of the coverage constraints based on $\psi$, $\psi_e$, $\psi_{\beta}$, $\psi_q$, and $\psi_a$, for high coverage probabilities.
  Unlike Fig.~\ref{fig:ResultPrCvrgSinr} for fixed $\lambda$ and $p$, these contour graphs enable to more precisely examine the relationship between $\lambda$ and $p$ and compare $\psi$ and its alternatives by focusing on a specific $\eta$.
  As summarized in Table~\ref{tbl:CoverProbRel}, it is observed that $\psi$ is lower bounded more and more tightly in the order of $\psi_\beta$, $\psi_e$, and $\psi_q$, whereas $\psi_a \geq \eta$ may tighten or relax $\psi \geq \eta$ depending on $\eta$ when $\alpha=4$.
  Recall that when $\beta > 1$, $\psi_q$ does not only enable the convexification of the coverage constraint but also provides tighter LB than that of $\psi_e$, but it is only available for $\alpha=4$.
  The increase in $\alpha$ causes a quicker decay in both the desired and interference signals, and accordingly the impact of the noise power becomes more significant.
  For this reason, the LBs of $\psi$ in Fig.~\ref{fig:ResultContourAlpha5} tend to have larger deviations than those in Fig.~\ref{fig:ResultContourAlpha4}.


\begin{figure*}[!t]
\centerline{
    \subfigure[Path loss exponent: $\alpha = 4$]{\includegraphics[width=9cm]{./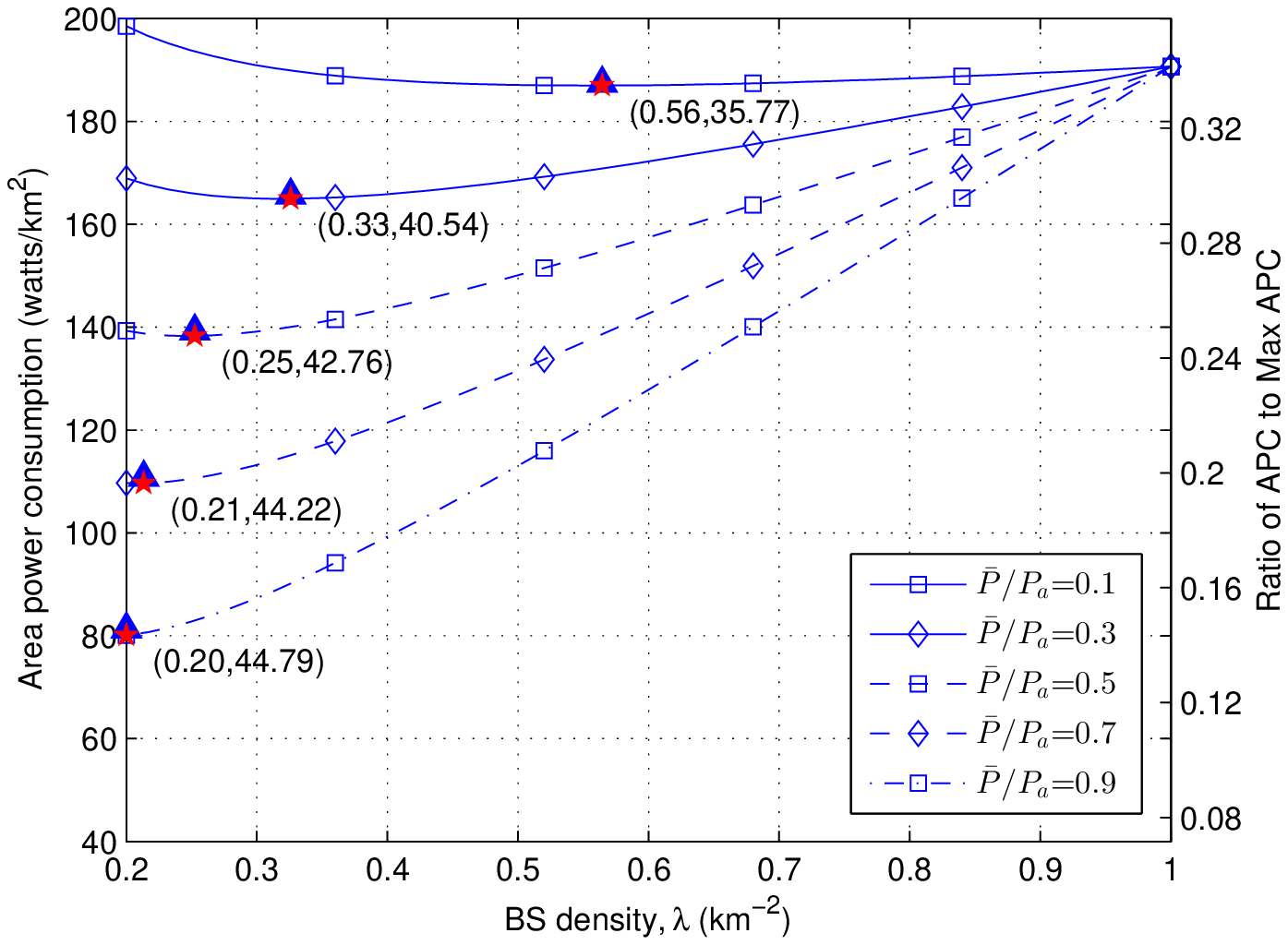}
    \label{fig:ResultApcMinUfrAlpha4}}
\hfil
    \subfigure[Path loss exponent: $\alpha = 5$]{\includegraphics[width=9cm]{./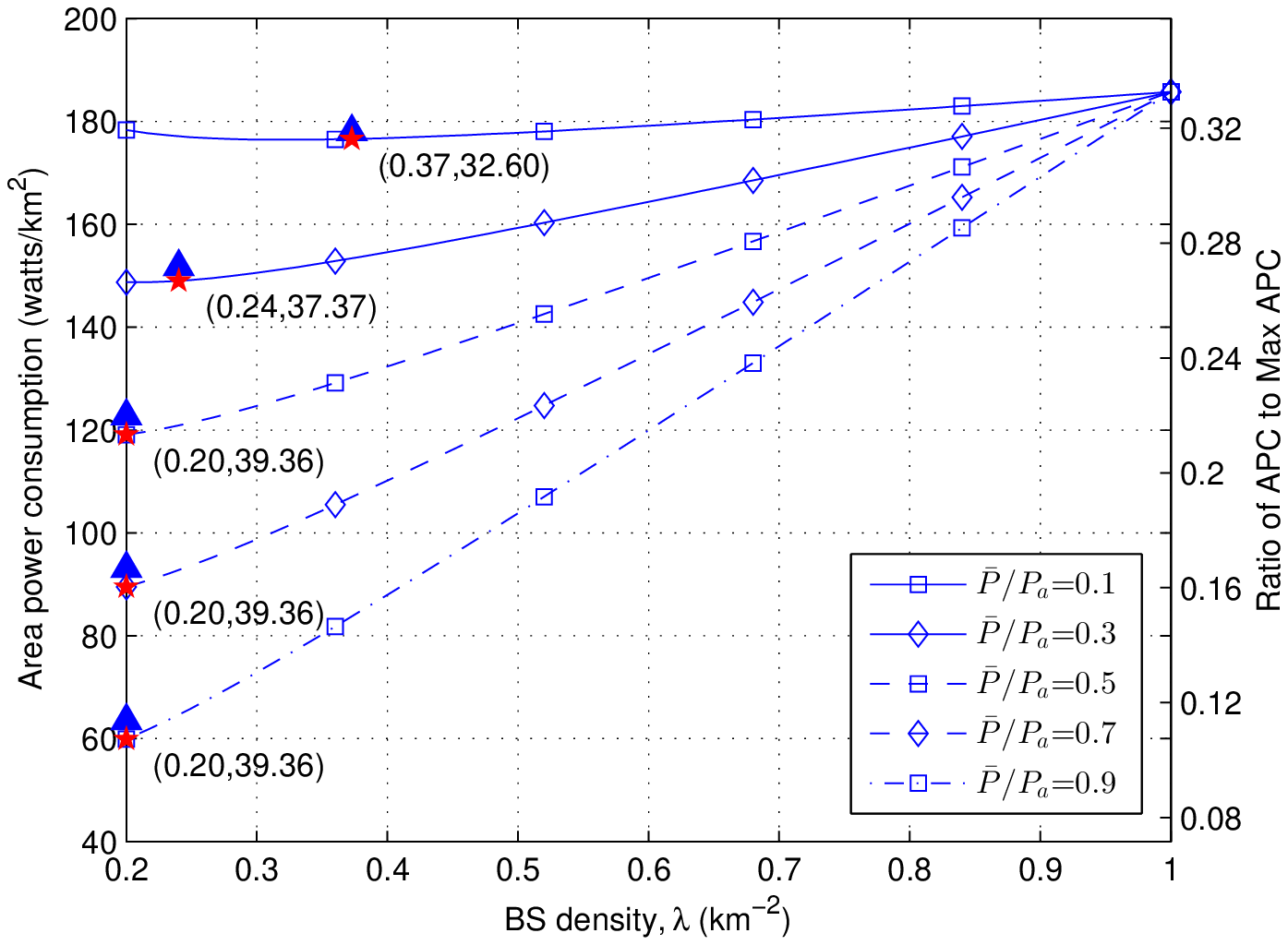}
    \label{fig:ResultApcMinUfrAlpha5}}}
    \caption{APC minimization through BS switching off and transmit power adjustment ($\beta=1$, $\eta=0.8$; solid triangles: $(\hat{\lambda},\hat{p})$, solid stars: $(\hat{\lambda},p^*(\hat{\lambda},1))$; the numbers in parentheses show $(\hat{\lambda},p^*(\hat{\lambda},1))$ in units of km${}^{-2}$ and dBm).}
\label{fig:ResultApcMinUfr}
\vspace{-0.5cm}
\end{figure*}

  Fig.~\ref{fig:ResultApcMinUfr} demonstrates that the design of $\lambda$ and $p$ proposed in Section~\ref{sec:GreenCellFullFreq} functions well in terms of minimizing the APC.
  In the subfigures, the ordinates on the left and right denote the APC values and the ratios of the APC to the maximum APC with $(\lambda_u, P_{max})$, respectively.
  The curves in this figure depict the APC at the optimal transmit power for given $\lambda$ and $\beta=1$ defined in \eqref{eq:Pgiven} and they indicate that this optimization decreases the APC to 15--34\% of that for $(\lambda_u, P_{max})$.
  In addition, these results support that the BS power consumption behavior parameterized by $\bar{P}$ has a dominant impact on the optimal topology adjustment.
  That is, the BS switching off is less effective for small $\bar{P}$, while it is the key to reducing the APC as $\bar{P}$ increases.
  In this figure, $\bar{P}$ is normalized by $P_a$ with the value in Table~\ref{tbl:Parameters}; thus, $0 < \bar{P}/P_a \leq 1$ where $\bar{P}/P_a$ close to zero indicates the quite small effect of the BS sleep mode while $\bar{P}/P_a = 1$ indicates an ideal sleep mode with zero BS power consumption.
  This figure also demonstrates that $(\hat{\lambda}, \hat{p})$ derived in Proposition~\ref{prop:OptPwCnsmTight} provides an excellent suboptimal solution for the joint optimization of $\lambda$ and $p$.
  For example, when $\alpha=4$, $(\hat{\lambda}, \hat{p})$ nearly minimizes the APC and is almost the same as $(\hat{\lambda}, p^*(\hat{\lambda},1))$, which further improves $(\hat{\lambda}, \hat{p})$ as proposed in Section~\ref{subsec:AlgorithmUfr}.
  In contrast, when $\alpha=5$, as a result of the harsh propagation loss, the noise power impact increases relatively and therefore $\psi_e$ has a slight deviation from $\psi$ as shown in Fig.~\ref{fig:ResultContourAlpha5}.
  Thus, through further optimizing $p$ for given $\hat{\lambda}$, the APC performance can be improved.
  Note that in Fig.~\ref{fig:ResultApcMinUfr}, when $\lambda=1$, the curves according to $\bar{P}$ meet at one point because there is no BS in sleep mode.

%
\begin{figure*}[!t]
\centerline{
    \subfigure[Path loss exponent: $\alpha = 4$]{\includegraphics[width=9cm]{./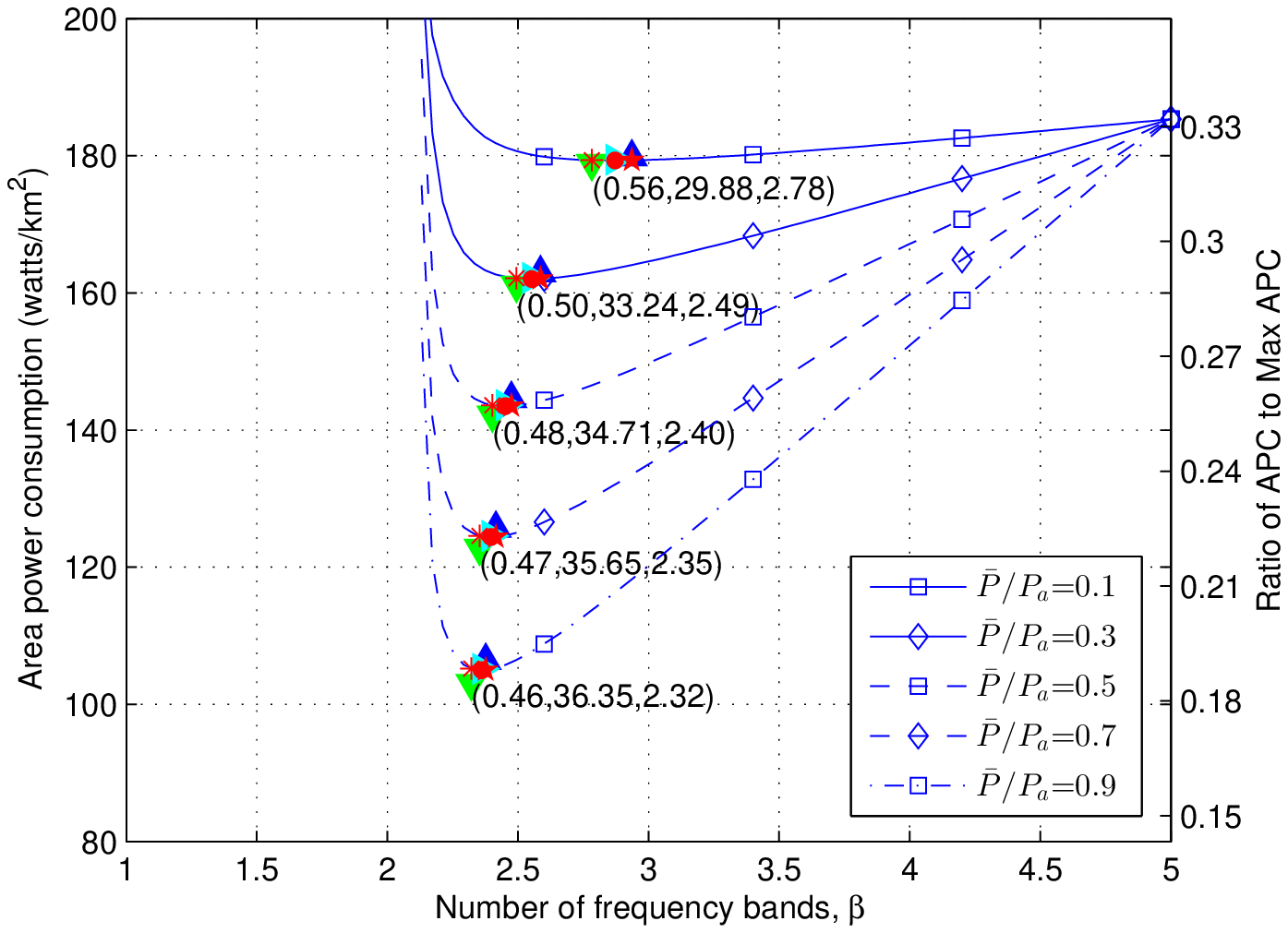}
    \label{fig:ResultApcMinPfrAlpha4}}
\hfil
    \subfigure[Path loss exponent: $\alpha = 5$]{\includegraphics[width=9cm]{./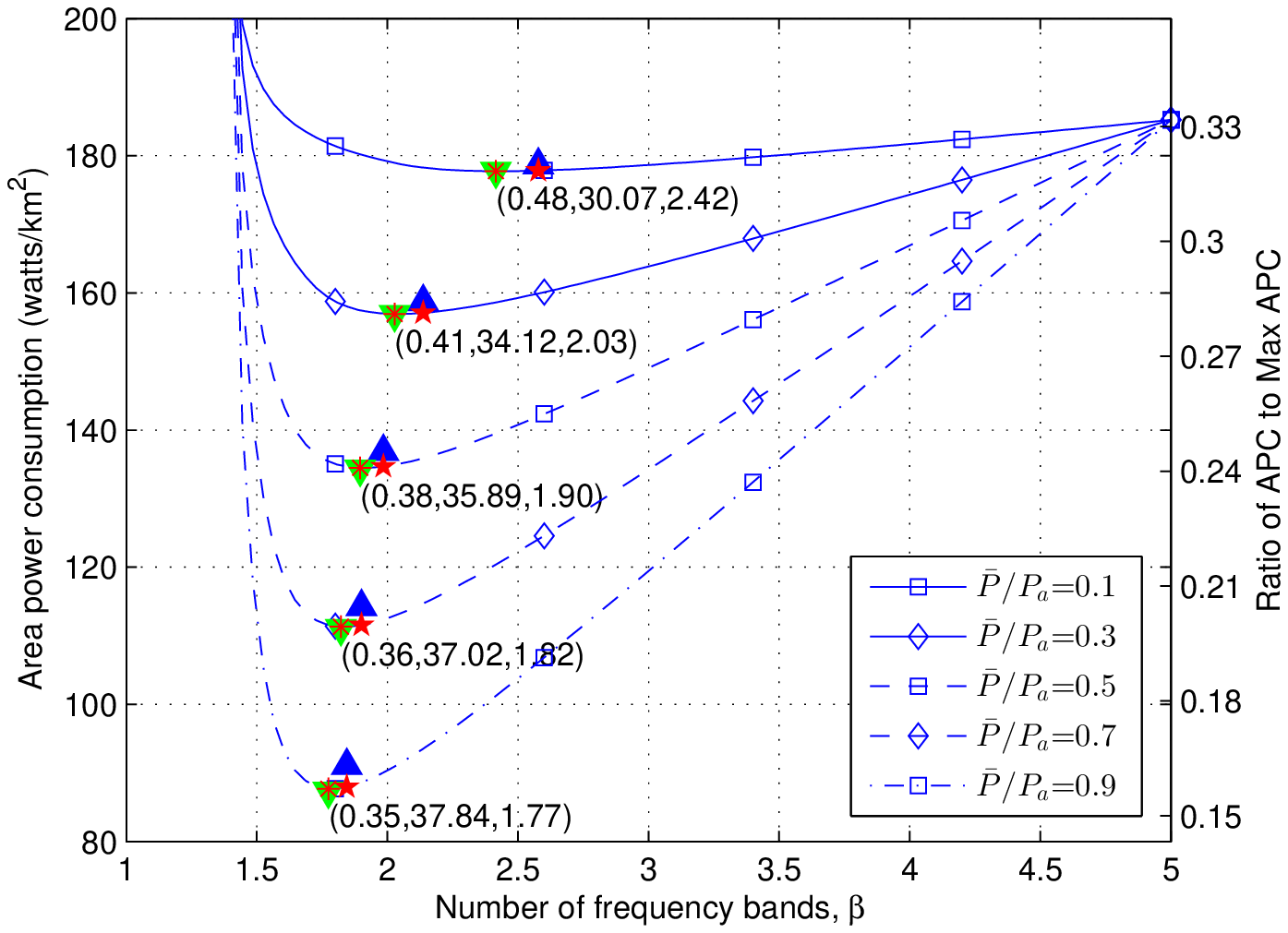}
    \label{fig:ResultApcMinPfrAlpha5}}}
    \caption{The effect of frequency reuse on the APC ($\eta=0.9$; the solid up/down/right triangles denote $(\breve{\lambda},\breve{p},\breve{\beta})$ based on $\psi_\beta$, $\psi_a$, and $\psi_q$, respectively, while the solid star, asterisk, and circle denote $(\breve{\lambda},p^*(\breve{\lambda},\breve{\beta}),\breve{\beta})$ based on $\psi_\beta$, $\psi_a$, and $\psi_q$, respectively; the numbers in parentheses show $(\breve{\lambda},p^*(\breve{\lambda},\breve{\beta}),\breve{\beta})$ based on $\psi_a$ in units of km${}^{-2}$, dBm, and number of frequency bands).}
\label{fig:ResultApcMinPfr}
\vspace{-0.5cm}
\end{figure*}

  Fig.~\ref{fig:ResultApcMinPfr} presents the joint optimization of the number of frequency bands, i.e., $\beta$, together with $\lambda$ and $p$.
  The curves in this figure denote the APC at $(\hat{\lambda},p^{*}(\hat{\lambda},\beta),\beta)$ for given $\beta$ in the abscissa, of which the derivation has been explained in Section~\ref{subsec:DesignPfrGvnBeta}.
  These curves reveal that the optimization of $\beta$ significantly contributes to the minimization of the APC.
  Similar to Fig.~\ref{fig:ResultApcMinUfr}, the curves meet at one point when $\beta=5$, and this is because $\beta\lambda_l \leq \lambda \leq \lambda_u$ from \eqref{eq:MinArPwTightPfr_CnstLambdaL}, \eqref{eq:MinArPwTightPfr_CnstLambdaU}, $\lambda_l = 0.2$, and $\lambda_u = 1$.
  In addition, the results demonstrate that $(\breve{\lambda},\breve{p},\breve{\beta})$ obtained from Proposition~\ref{prop:OptPwCnsmPfr} (in particular, condition (ii)) is an excellent candidate for reducing the APC.
  It is observed that, when $\alpha=4$, $(\breve{\lambda},\breve{p},\breve{\beta})$ for $\psi_a \geq \eta$ may offer an APC less than the minimum APC of a curve, and this implies that $(\breve{\lambda},\breve{p},\breve{\beta})$ for $\psi_a \geq \eta$ may violate constraint $\psi \geq \eta$. 
  In contrast, when $\alpha = 5$, it is demonstrated that $\psi_a$ provides the almost minimum operating point.
  As addressed in Sections~\ref{sec:GreenCellPartialFreq}, the results in this figure support that $(\breve{\lambda},p^*(\breve{\lambda},\breve{\beta}),\breve{\beta})$ for $\psi_\beta$ and $\psi_q$ further improve the APC while that for $\psi_a$ renders the solution feasible.

\begin{figure*}[!t]
\centerline{
    \subfigure[UFR ($\beta=1$, $\eta = 0.8$)]{\includegraphics[width=9cm]{./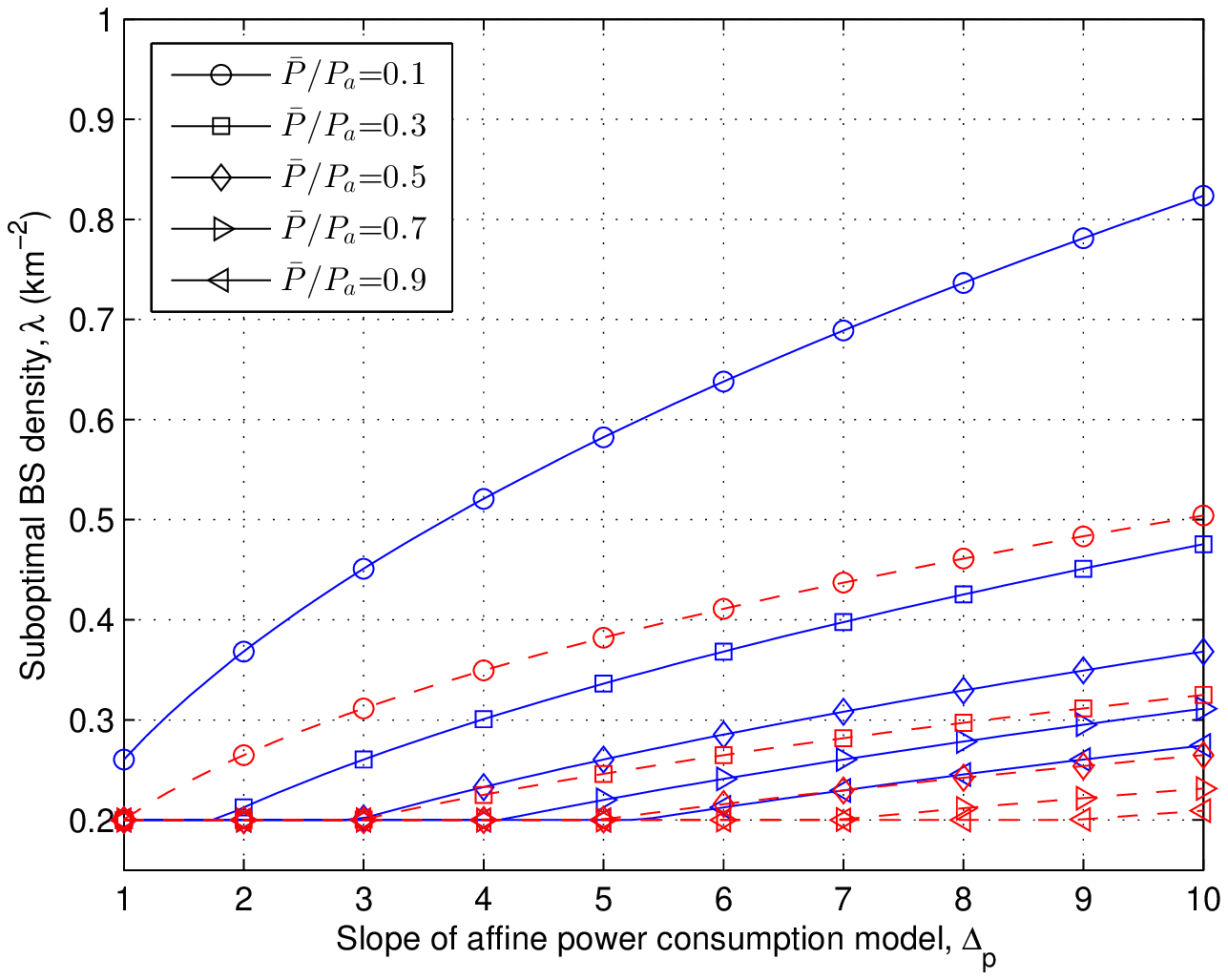}
    \label{fig:ResultPwcModelEffectUfr}}
\hfil
    \subfigure[PFR ($\beta \geq 1$, $\eta = 0.9$)]{\includegraphics[width=9cm]{./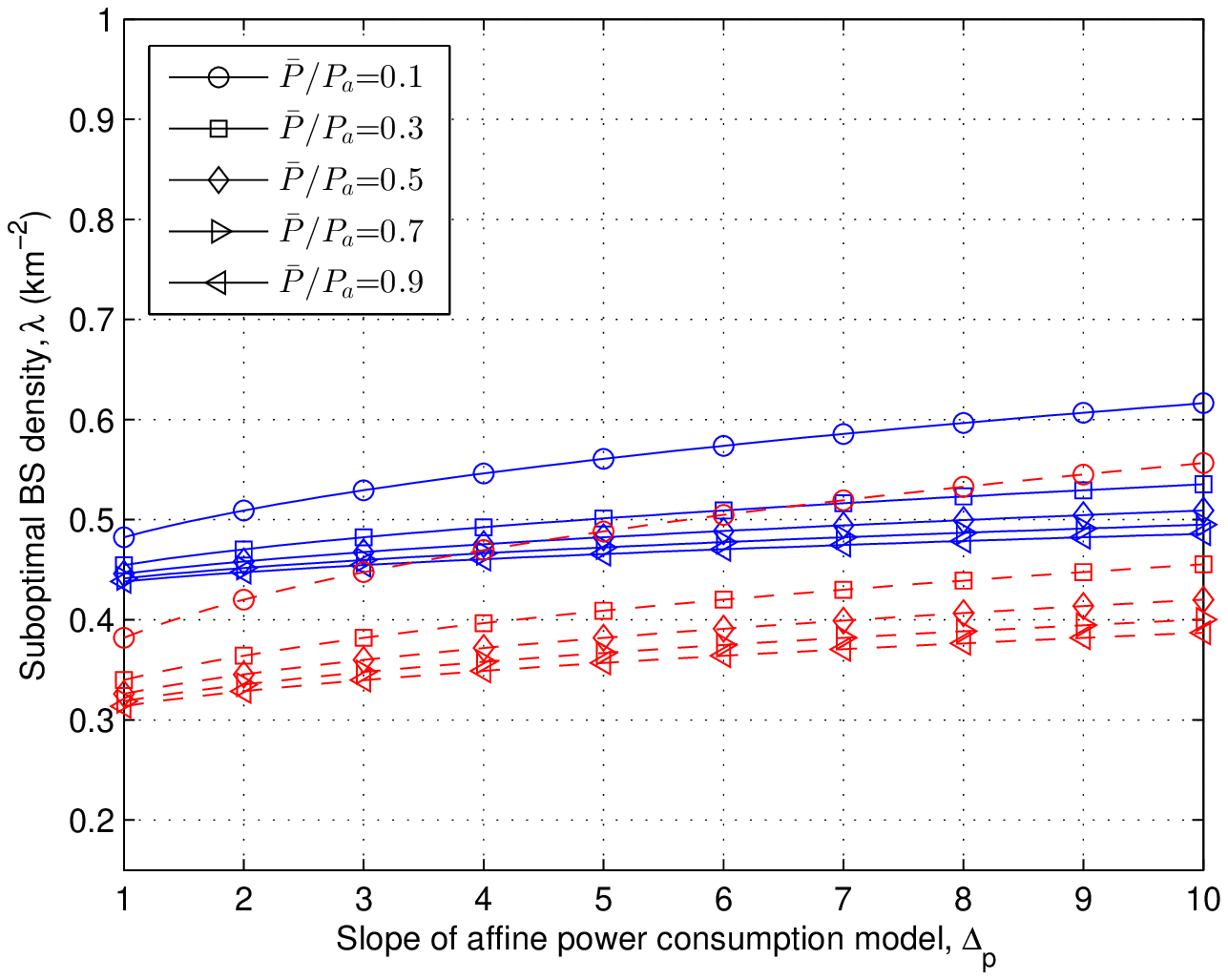}
    \label{fig:ResultPwcModelEffectPfr}}}
    \caption{Impact of affine power consumption model (solid lines: $\alpha = 4$, dashed lines: $\alpha = 5$).}
\label{fig:ResultPwcModelEffect}
\vspace{-0.5cm}
\end{figure*}

  Fig.~\ref{fig:ResultPwcModelEffect} elaborates the impact of the affine BS power consumption model in \eqref{eq:PsModel} on the BS switching off.
  The ordinates of the two subfigures denote $\hat{\lambda}$ in Proposition~\ref{prop:OptPwCnsmTight} for the UFR and $\breve{\lambda}$ derived from the conditions in Proposition~\ref{prop:OptPwCnsmPfr} for the PFR, respectively. 
  As already observed in Figs.~\ref{fig:ResultApcMinUfr} and \ref{fig:ResultApcMinPfr}, $\hat{\lambda}$ and $\breve{\lambda}$ decrease as $\bar{P}$ increases.
  Because $\bar{P}$ signifies the power saving effect in the sleep mode, BS switching off increasingly revs up as $\bar{P}$ increases.
  On the other hand, it is observed that the density of the active BSs also increases with $\Delta_p$.
  Note that both $\lambda$ and $p$ increase the APC in \eqref{eq:DefAreaPwCons} by a factor of $\Delta_p$ and the coverage probability constraint expressed in terms of $\lambda^{\frac{\alpha}{2}} p$ is more sensitive to $\lambda$ compared with $p$ for given $\beta$.
  Therefore, in the APC minimization, as $\Delta_p$ increases, $\lambda$ increases while $p$ decreases.
  It is observed that this feature remains consistent regardless of the propagation loss model denoted by $\alpha$.

\section{Conclusions} \label{sec:Conclusions}

  This paper investigated the minimization of the area power consumption of BSs under multicell coverage and capacity constraints.
  The design problems were expressed as GP and their optimal solution and optimality conditions revealed that the operation for spatial topology adjustment, e.g., only reducing the transmit power, only switching off the BSs, and both switching off the BS and adjusting the transmit power, is determined based on the amount of power saving that results from the BS switching off.
  Furthermore, this operation depends on the BS power consumption behaviors, wireless environments, and target network performances.
  Even though simple models for mathematical tractability, e.g., a simple network capacity constraint and homogeneous PPP, were assumed, the network-wide results in this paper can be used as design guidelines of the BS switching off operations for green multicell networks.
  As future work, it would be interesting to investigate the design problem for greening the entire access network through considering the power consumption of UEs as well as BSs.


\appendices

\section{Proof of the increase in $\psi(\lambda,p,1)$ with $\lambda$ and $p$ in Remark~\ref{rmk:Increasing}} \label{app:proof:rmk:Increasing}


  It is shown that $\frac{\partial \psi(\lambda,p,1)}{\partial \lambda} > 0$ and $\frac{\partial \psi(\lambda,p,1)}{\partial p} > 0$.
  Let $a \triangleq \pi(1+\phi(\xi,\alpha)) > 0$ and $b \triangleq \xi \sigma^2 > 0$.
  Then, $\psi(\lambda, p, 1)$ in \eqref{eq:CoverPrPsi} is expressed as $\pi \lambda \int_0^\infty \exp\left( -a\lambda x - b p^{-1} x^{\frac{\alpha}{2}} \right) dx$.
  Accordingly,
\begin{align}\begin{aligned}\label{eq:PartialPhiLambda}
  \frac{\partial \psi(\lambda,p,1)}{\partial \lambda} & = \pi \int_0^{\infty} \left(  1 - a \lambda x \right) \exp\left(- a \lambda x - b p^{-1} x^{\frac{\alpha}{2}} \right) dx \\
    & \stackrel{\mathrm{(a)}}{\geq} \pi \int_0^\infty \left( 1 - a\lambda x \right) \left( 1 - b p^{-1} x^{\frac{\alpha}{2}} \right) \exp\left( - a \lambda x \right) dx \\
    & = \frac{\pi \alpha}{2} \left( a \lambda \right)^{-\frac{\alpha}{2}-1} b p^{-1} \Gamma\left( 1 + \frac{\alpha}{2} \right) > 0,
\end{aligned}\end{align}
  where (a) follows from $\exp(-x) \geq 1 - x$ for $x \geq 0$\footnote{This is shown in Section~\ref{subsec:DesignUfr}.}.
  On the other hand,
\begin{align}\begin{aligned}\label{eq:PartialPhiP}
  \frac{\partial \psi(\lambda,p,1)}{\partial p} = \pi b \lambda p^{-2} \int_0^{\infty} x^{\frac{\alpha}{2}} \exp\left( -a\lambda x - b p^{-1} x^{\frac{\alpha}{2}} \right) dx,
\end{aligned}\end{align}
  where the integrand is positive except when $x=0$.
  Hence, $\frac{\partial \psi(\lambda,p,1)}{\partial p} > 0$.
  \hfill\IEEEQED

\section{Proof of Proposition~\ref{prop:OptPwCnsmTight}} \label{app:proof:prop:OptPwCnsmTight}

  Let a variable with superscript $(g)$ denote the logarithm of an original variable.
  That is, $x^{(g)}$ represents $\log x$ for $x>0$.
  Then, when $\kappa \triangleq \frac{\alpha}{2}$, problem \eqref{eq:MinArPwTight} is equivalent to the following GP:
\begin{subequations}\label{eq:MinArPwTightGeom}
\begin{align}
    \underset{\lambda^{(g)},\,p^{(g)}}{\text{minimize}} \quad   & \log \left( \exp(\lambda^{(g)}+\bar{P}^{(g)}) + \exp(\lambda^{(g)}+p^{(g)}+\Delta_p^{(g)}) \right) \label{eq:MinArPwTightGeom_Obj} \\
    \text{subject to} \quad & -\kappa\lambda^{(g)} - p^{(g)} + \theta^{(g)} \leq 0 \label{eq:MinArPwTightGeom_CnstCoverage} \\
    & -\lambda^{(g)} + \lambda^{(g)}_l \leq 0 \label{eq:MinArPwTightGeom_CnstLambdaL} \\
    & \lambda^{(g)} - \lambda^{(g)}_u \leq 0 \label{eq:MinArPwTightGeom_CnstLambdaU} \\
    & p^{(g)} - P^{(g)}_{max} \leq 0 \label{eq:MinArPwTightGeom_CnstTxPw}.
\end{align}
\end{subequations}
%
  Note that problem \eqref{eq:MinArPwTightGeom} is a convex optimization problem, because a log-sum-exponential function is convex and the feasible set is the intersection of the sublevel set of affine functions \cite{Boyd07_CvxOpt}.
  The assumption of $\lambda_u^2 P_{max} > \theta(\xi,\eta)$ guarantees that there exists $(\lambda^{(g)},p)$ to meet affine inequalities \eqref{eq:MinArPwTightGeom_CnstCoverage}--\eqref{eq:MinArPwTightGeom_CnstTxPw}.
  That is, the Slater's condition holds.
  This implies that the KKT conditions provide the necessary and sufficient conditions for optimality in problem \eqref{eq:MinArPwTightGeom}.
  Therefore, from the KKT conditions of problem \eqref{eq:MinArPwTightGeom}, $\hat{\lambda}^{(g)}$ and $\hat{p}^{(g)}$ can be expressed as functions of $\kappa$, $\bar{P}^{(g)}$, $\Delta_p^{(g)}$, $\lambda^{(g)}_l$, $\lambda^{(g)}_u$, $P^{(g)}_{max}$, and $\theta^{(g)}$ according to the intervals of $\bar{P}^{(g)}$ and this solution can be rewritten in terms of original environmental variables, like \eqref{eq:OptSolTight}.
\hfill\IEEEQED

\section{Proof of Proposition~\ref{prop:OptPwCnsmPfr}} \label{app:proof:prop:OptPwCnsmPfr}

  Like Appendix~\ref{app:proof:prop:OptPwCnsmTight}, for $x>0$, $x^{(g)}$ denotes $\log x$.
  Then, problem \eqref{eq:MinArPwTightPfrGeom} can be recast into
\begin{subequations}\label{eq:MinArPwGenTightFrGeom}
\begin{align}
  \underset{\lambda^{(g)},\,p^{(g)},\,\beta^{(g)}}{\text{minimize}} \quad   & \log\left( \exp\left( \lambda^{(g)} + \bar{P}^{(g)}  \right) + \exp\left( \Delta_p^{(g)} + \lambda^{(g)} + p^{(g)}  \right) \right) \label{eq:MinArPwGenTightFrGeom_Obj} \\
  \text{subject to} \quad & \log\left( \exp\left( c^{(g)}_0 - \kappa\lambda^{(g)} - p^{(g)} - \beta^{(g)} \right) + \exp\left( c^{(g)}_1 - \beta^{(g)} \right) + \exp\left( c^{(g)}_2 - 2\beta^{(g)} \right) \right) \leq 0 \label{eq:MinArPwGenTightFrGeom_CnstCoverage} \\
  & \lambda^{(g)}_l - \lambda^{(g)} + \beta^{(g)} \leq 0 \label{eq:MinArPwGenTightFrGeom_CnstLambdaL} \\
  & -\lambda^{(g)}_u + \lambda^{(g)} \leq 0 \label{eq:MinArPwGenTightFrGeom_CnstLambdaU} \\
  & p^{(g)} - P^{(g)}_{max} \leq 0 \label{eq:MinArPwGenTightFrGeom_CnstTxPw} \\
  & -\beta^{(g)} \leq 0 \label{eq:MinArPwGenTightFrGeom_CnstDelta}.
\end{align}
\end{subequations}
  Note that problem \eqref{eq:MinArPwGenTightFrGeom} is a convex problem.
  The Lagrangian associated with problem \eqref{eq:MinArPwGenTightFrGeom} is given as follows:
\begin{align}\begin{aligned}\label{eq:LaplacianFr}
  & L_\beta(\lambda^{(g)},\,p^{(g)},\,\beta^{(g)},\,\tau_c,\,\tau_{\lambda l},\,\tau_{\lambda u},\,\tau_p,\,\tau_\beta) \triangleq \log\left( \exp\left( \lambda^{(g)} + \bar{P}^{(g)}  \right) + \exp\left( \Delta_p^{(g)} + \lambda^{(g)} + p^{(g)}  \right) \right) \\
    & \hspace{1cm} + \tau_c \log\left( \exp\left( c^{(g)}_0 - \kappa\lambda^{(g)} - p^{(g)} - \beta^{(g)} \right) + \exp\left( c^{(g)}_1 - \beta^{(g)} \right) + \exp\left( c^{(g)}_2 - 2\beta^{(g)} \right) \right) \\
    & \hspace{1cm} + \tau_{\lambda l} \left( \lambda^{(g)}_l - \lambda^{(g)} + \beta^{(g)} \right) + \tau_{\lambda u} \left( -\lambda^{(g)}_u + \lambda^{(g)} \right) + \tau_p \left( p^{(g)} - P^{(g)}_{max} \right) - \tau_\beta \beta^{(g)}.
\end{aligned}\end{align}
  where $\tau_c$, $\tau_{\lambda l}$, $\tau_{\lambda u}$, $\tau_{p}$ and $\tau_{\beta}$ are the nonnegative dual variables associated with constraints \eqref{eq:MinArPwGenTightFrGeom_CnstCoverage}--\eqref{eq:MinArPwGenTightFrGeom_CnstDelta}. 
  The KKT conditions of problem \eqref{eq:MinArPwGenTightFrGeom} include $\nabla L_\beta = \mathbf{0}$, which is equivalent to
\begin{align}
  & \frac{\partial L_\beta}{\partial \lambda^{(g)}} = 1 - \breve{\tau}_c \frac{\kappa \exp \left( c^{(g)}_0 - \kappa\breve{\lambda}^{(g)} - \breve{p}^{(g)} - \breve{\beta}^{(g)} \right)}{\exp\left( c^{(g)}_0 - \kappa\breve{\lambda}^{(g)} - \breve{p}^{(g)} - \breve{\beta}^{(g)} \right) + \exp\left( c^{(g)}_1 - \breve{\beta}^{(g)} \right) + \exp\left( c^{(g)}_2 - 2\breve{\beta}^{(g)} \right)} \nonumber \\
    & \hspace{2cm} - \breve{\tau}_{\lambda l} + \breve{\tau}_{\lambda u} = 0, \label{eq:Vanish_LaplacianFrLambda} \\
  & \frac{\partial L_\beta}{\partial p^{(g)}} = \frac{\exp\left( \Delta_p^{(g)} + \breve{p}^{(g)} \right)}{\exp\left( \bar{P}^{(g)} \right) + \exp\left( \Delta_p^{(g)} + \breve{p}^{(g)} \right)} \nonumber \\
    & \hspace{2cm} - \breve{\tau}_c \frac{\exp \left( c^{(g)}_0 - \kappa\breve{\lambda}^{(g)} - \breve{p}^{(g)} - \breve{\beta}^{(g)} \right)}{\exp \left( c^{(g)}_0 - \kappa\breve{\lambda}^{(g)} - \breve{p}^{(g)} - \breve{\beta}^{(g)} \right) + \exp\left( c^{(g)}_1 - \breve{\beta}^{(g)} \right) + \exp\left( c^{(g)}_2 - 2\breve{\beta}^{(g)} \right)} \nonumber \\
    & \hspace{2cm} + \breve{\tau}_p = 0, \label{eq:Vanish_LaplacianFrP} \\
  & \frac{\partial L_\beta}{\partial \beta^{(g)}} = - \breve{\tau}_c \frac{\exp \left( c^{(g)}_0 - \kappa\breve{\lambda}^{(g)} - \breve{p}^{(g)} - \breve{\beta}^{(g)} \right) + \exp\left( c^{(g)}_1 - \breve{\beta}^{(g)} \right) + 2\exp\left( c^{(g)}_2 - 2\breve{\beta}^{(g)} \right)}{\exp \left( c^{(g)}_0 - \kappa\breve{\lambda}^{(g)} - \breve{p}^{(g)} - \breve{\beta}^{(g)} \right) + \exp\left( c^{(g)}_1 - \breve{\beta}^{(g)} \right) + \exp\left( c^{(g)}_2 - 2\breve{\beta}^{(g)} \right)} \nonumber \\
    & \hspace{2cm} + \breve{\tau}_{\lambda l} - \breve{\tau}_{\beta} = 0, \label{eq:Vanish_LaplacianFrDelta}
\end{align}
  where $\breve{\tau}_c$, $\breve{\tau}_{\lambda l}$, $\breve{\tau}_{\lambda u}$, $\breve{\tau}_{p}$ and $\breve{\tau}_{\beta}$ denote the optimal dual variables of problem \eqref{eq:MinArPwGenTightFrGeom}.
  It is worth noting that $\breve{\tau}_c > 0$, which follows from \eqref{eq:Vanish_LaplacianFrP}.
  This means that the associated constraint is active, from the complementary slackness, which is one of the KKT conditions.
  That is,
\begin{align} \label{eq:OptCndtFrCoverage}
  \exp\left( c^{(g)}_0 - \kappa\breve{\lambda}^{(g)} - \breve{p}^{(g)} - \breve{\beta}^{(g)} \right) + \exp\left( c^{(g)}_1 - \breve{\beta}^{(g)} \right) + \exp\left( c^{(g)}_2 - 2\breve{\beta}^{(g)} \right) = 1.
\end{align}
  From $\breve{\tau}_c > 0$ and \eqref{eq:Vanish_LaplacianFrDelta}, $\breve{\tau}_{\lambda l} > 0$; hence, similar to \eqref{eq:OptCndtFrCoverage}, from the complementary slackness,
\begin{align} \label{eq:OptCndtFrDeltaLambda}
  \breve{\lambda}^{(g)} = \breve{\lambda}^{(g)}_l + \breve{\beta}^{(g)}.
\end{align}
  Therefore, \eqref{eq:OptPwCnsmPfr} is yielded from \eqref{eq:OptCndtFrCoverage} and \eqref{eq:OptCndtFrDeltaLambda}.

  Cases (i), (iii), and (iv) represent the extreme instances where one of constraints \eqref{eq:MinArPwGenTightFrGeom_CnstLambdaU}, \eqref{eq:MinArPwGenTightFrGeom_CnstTxPw}, and \eqref{eq:MinArPwGenTightFrGeom_CnstDelta} is active.
  In contrast, case (ii) signifies that inequality constraints \eqref{eq:MinArPwGenTightFrGeom_CnstLambdaU}, \eqref{eq:MinArPwGenTightFrGeom_CnstTxPw}, and \eqref{eq:MinArPwGenTightFrGeom_CnstDelta} are inactive; thus, $\tau_{\lambda u} = 0$, $\tau_p = 0$, and $\tau_\beta = 0$.
  From \eqref{eq:Vanish_LaplacianFrLambda}--\eqref{eq:OptCndtFrDeltaLambda} and $(\tau_{\lambda u},\tau_p,\tau_\beta) = (0,0,0)$, the result of case (ii) is derived.
%
%
\hfill\IEEEQED


\bibliographystyle{IEEEtran}
\bibliographystyle{IEEEbib}
\bibliography{./tskIEEEabrv,./tskRef}

\begin{thebibliography}{10}
\providecommand{\url}[1]{#1}
\csname url@samestyle\endcsname
\providecommand{\newblock}{\relax}
\providecommand{\bibinfo}[2]{#2}
\providecommand{\BIBentrySTDinterwordspacing}{\spaceskip=0pt\relax}
\providecommand{\BIBentryALTinterwordstretchfactor}{4}
\providecommand{\BIBentryALTinterwordspacing}{\spaceskip=\fontdimen2\font plus
\BIBentryALTinterwordstretchfactor\fontdimen3\font minus
  \fontdimen4\font\relax}
\providecommand{\BIBforeignlanguage}[2]{{%
\expandafter\ifx\csname l@#1\endcsname\relax
\typeout{** WARNING: IEEEtran.bst: No hyphenation pattern has been}%
\typeout{** loaded for the language `#1'. Using the pattern for}%
\typeout{** the default language instead.}%
\else
\language=\csname l@#1\endcsname
\fi
#2}}
\providecommand{\BIBdecl}{\relax}
\BIBdecl

\bibitem{Bhargava11_GreenCellular}
Z.~Hasan, H.~Boostanimehr, and V.~Bhargava, ``Green cellular networks: A
  survey, some research issues and challenges,'' \emph{{IEEE} Commun. Surveys
  Tuts.}, vol.~13, no.~4, pp. 524--540, Nov. 2011.

\bibitem{Fehske11_GlobalFootprint}
A.~Fehske, G.~Fettweis, J.~Malmodin, and G.~Biczók, ``The global footprint of
  mobile communications: The ecological and economic perspective,''
  \emph{{IEEE} Commun. Mag.}, vol.~49, no.~8, pp. 55--62, Aug. 2011.

\bibitem{Auer10_EnergyAwareRan}
L.~M. Correia, D.~Zeller, O.~Blume, D.~Ferling, Y.~Jading, I.~Godor, G.~Auer,
  and L.~V. der Perre, ``Challenges and enabling technologies for energy aware
  mobile radio networks,'' \emph{{IEEE} Commun. Mag.}, vol.~48, no.~11, pp.
  66--72, Nov. 2010.

\bibitem{McGuire11_RenewableEnergyBs}
C.~McGuire, M.~R. Brew, F.~Darbari, G.~Bolton, A.~McMahon, D.~H. Crawford,
  S.~Weiss, and R.~W. Stewart, ``Hopscotch--a low-power renewable energy base
  station network for rural broadband access,'' \emph{{EURASIP} Journal on
  Wireless Commun. and Netw.}, vol. 2012, no. 112, pp. 1--12, Mar. 2012.

\bibitem{Brubaker09_EfficiencyPAs}
D.~Brubaker, ``Optimizing performance and efficiency of pas in wireless base
  stations: Digital pre-distortion reduces signal distortion at high power
  levels,'' \emph{White Paper, Texas Instruments}, pp. 1--8, Feb. 2009.

\bibitem{Frenger11_BsDtx}
P.~Frenger, P.~Moberg, J.~Malmodin, Y.~Jading, and I.~Godor, ``Reducing energy
  consumption in {LTE} with cell {DTX},'' in \emph{Proc. {IEEE} Veh. Technol.
  Conf. ({VTC})}, Yokohama, Japan, May 2011.

\bibitem{Son11_EnergyHierarchicalCell}
K.~Son, E.~Oh, and B.~Krishnamachari, ``Energy-aware hierarchical cell
  configuration: from deployment to operation,'' in \emph{Proc. {IEEE} Int'l
  Conf. Comp. Commun. ({INFOCOM}) Workshop Green Commun. and Netw.}, ShangHai,
  China, Apr. 2011.

\bibitem{Fettweis11_GreenRelay}
P.~Rost and G.~Fettweis, \emph{Green communications in cellular networks with
  fixed relay nodes}.\hskip 1em plus 0.5em minus 0.4em\relax Chapter 11 in
  Cooperative Cellular Wireless Network, Cambridge, 2011.

\bibitem{Alu08_Traffic}
{Alcatel-Lucent}, ``Alcatel-lucent 9900 wireless network guardian,''
  \emph{Tech. White Paper}, 2008, Available at
  http://www.alcatel-lucent.com/WNG.

\bibitem{Ferling10_EnergyEfficiency}
D.~Ferling \emph{et~al.}, ``Energy efficiency approaches for radio nodes,'' in
  \emph{Future Netw. Mobile Summit}, Florence, Italy, Jun. 2010.

\bibitem{Mancuso11_RedCostCellular}
V.~Mancuso and S.~Alouf, ``Reducing costs and pollution in cellular networks,''
  \emph{{IEEE} Commun. Mag.}, vol.~49, no.~8, pp. 63--71, Aug. 2011.

\bibitem{Auer11_EnergyWirelessNet}
G.~Auer \emph{et~al.}, ``How much energy is needed to run a wireless network?''
  \emph{{IEEE} Wireless Commun. Mag.}, vol.~18, no.~5, pp. 40--49, Oct. 2011.

\bibitem{Oh11_DynamicEnergyCell}
E.~Oh, B.~Krishnamachari, X.~Liu, and Z.~Niu, ``Toward dynamic energy-efficient
  operation of cellular network infrastructure,'' \emph{{IEEE} Commun. Mag.},
  vol.~49, no.~6, pp. 56--61, Jun. 2011.

\bibitem{Han13_GreenCellCoop}
T.~Han and N.~Ansari, ``On greening cellular networks via multicell
  cooperation,'' \emph{{IEEE} Wireless Commun. Mag.}, vol.~20, no.~1, pp.
  82--89, Feb. 2013.

\bibitem{Chiaraviglio08_EnergyUmts}
L.~Chiaraviglio, D.~Ciullo, M.~Meo, and M.~A. Marsan, ``Energy-aware {UMTS}
  access networks,'' in \emph{Proc. Int'l Symp. Wireless Personal Multimedia
  Commun. (WPMC)}, Lapland, Finland, Sep. 2008.

\bibitem{Samdanis10_EnergySonCell}
K.~Samdanis, D.~Kutscher, and M.~Brunner, ``Self-organized energy efficient
  cellular networks,'' in \emph{Proc. {IEEE} Int'l Symp. Personal, Indoor and
  Mobile Radio Commun. ({PIMRC})}, Istanbul, Turkey, Sep. 2010.

\bibitem{Samdanis10_EnergyDynCell}
------, ``Dynamic energy-aware network re-configuration for cellular urban
  infrastructures,'' in \emph{Proc. {IEEE} Global Commun. Conf. ({GLOBECOM})
  Workshops}, Miami, FL, Dec. 2010.

\bibitem{Marsan11_EnergyCoopCell}
M.~A. Marsan and M.~Meo, ``Energy efficient wireless {Internet} access with
  cooperative cellular networks,'' \emph{Comput. Netw.}, vol.~55, no.~2, pp.
  386--398, Feb. 2011.

\bibitem{OhSon13_DynBsOnOff}
E.~Oh, K.~Son, and B.~Krishnamachari, ``Dynamic base station switching-on/off
  strategies for green cellular networks,'' \emph{{IEEE} Trans. Wireless
  Commun.}, vol.~12, no.~5, pp. 2126--2136, May 2013.

\bibitem{Holtkamp13_BsPowerModel}
H.~Holtkamp, G.~Auer, V.~Giannini, and H.~Haas, ``A parameterized base station
  power model,'' \emph{{IEEE} Commun. Lett.}, vol.~17, no.~11, pp. 2033--2035,
  Nov. 2013.

\bibitem{Peng11_PowerSavingCell}
C.~Peng, S.-B. Lee, S.~Lu, H.~Luo, and H.~Li, ``Traffic-driven power saving in
  operational {3G} cellular networks,'' in \emph{Proc. {ACM} Int'l Conf. Mobile
  Computing and Networking ({MOBICOM})}, Las Vegas, NV, Sep. 2011.

\bibitem{Bhaumik10_BreatheCool}
S.~Bhaumik, G.~Narlikar, S.~Chattopadhyay, and S.~Kanugovi, ``Breathe to stay
  cool: Adjusting cell sizes to reduce energy consumption,'' in \emph{{ACM}
  Special Interest Group on Data Commun. {(SIGCOMM)} Workshop on Green Netw.
  Workshop}, New Delhi, India, Aug. 2010.

\bibitem{Holtkamp13_MinBsPwc}
H.~Holtkamp, G.~Auer, S.~Bazzi, and H.~Haas, ``Minimizing base station power
  consumption,'' \emph{{IEEE} J. Sel. Areas Commun.}, vol.~PP, no.~99, pp.
  1--10, May 2013.

\bibitem{Han12_EnergyCoopCell}
F.~Han, Z.~Safar, W.~S. Lin, Y.~Chen, and K.~J.~R. Liu, ``Energy-efficient
  cellular network operation via base station cooperation,'' in \emph{Proc.
  {IEEE} Int'l Conf. on Commun. ({ICC})}, Ottawa, ON, Jun. 2012.

\bibitem{Shamai10_GreenUplink}
E.~Altman, M.~K. Hanawal, R.~El-Azouzi, and S.~Shamai, ``Tradeoffs in green
  cellular networks,'' \emph{{ACM} {SIGMETRICS} Performance Evaluation Review},
  vol.~39, no.~3, pp. 67--71, Dec. 2011.

\bibitem{Cao12_OptBsDensity}
D.~Cao, S.~Zhou, and Z.~Niu, ``Optimal base station density for
  energy-efficient heterogeneous cellular networks,'' in \emph{Proc. {IEEE}
  Int'l Conf. on Commun. ({ICC})}, Ottawa, Canada, Jun. 2012.

\bibitem{Andrews11_Cellular}
J.~G. Andrews, F.~Baccelli, and R.~K. Ganti, ``A tractable approach to coverage
  and rate in cellular networks,'' \emph{{IEEE} Trans. Wireless Commun.},
  vol.~59, no.~11, pp. 3122--3134, Nov. 2011.

\bibitem{Andrews11_FfrDownlink}
T.~D. Novlan, R.~K. Ganti, A.~Ghosh, and J.~G. Andrews, ``Analytical evaluation
  of fractional frequency reuse for ofdma cellular networks,'' \emph{{IEEE}
  Trans. Wireless Commun.}, vol.~10, no.~12, pp. 4294--4305, Dec. 2011.

\bibitem{Auer11_CellEnergyEva}
G.~Auer \emph{et~al.}, ``Cellular energy efficiency evaluation framework,'' in
  \emph{Proc. {IEEE} Veh. Technol. Conf. ({VTC})}, Yokohama, Japan, May 2011.

\bibitem{Stoyan96_StochasticGeom}
D.~Stoyan, W.~Kendall, and J.~Mecke, \emph{Stochastic geometry and its
  applications, 2nd edition}.\hskip 1em plus 0.5em minus 0.4em\relax John Wiley
  and Sons, 1996.

\bibitem{Andrews12_SG_KtierHetNet}
H.~S. Dhillon, R.~K. Ganti, F.~Baccelli, and J.~G. Andrews, ``Modeling and
  analysis of {K}-tier downlink heterogeneous cellular networks,'' \emph{{IEEE}
  J. Sel. Areas Commun.}, vol.~30, no.~3, pp. 550--560, Apr. 2012.

\bibitem{LTE_LinkPerformance}
{3GPP TR 36.942 ver.11.0.0}, \emph{Radio frequency ({RF}) system scenarios
  ({Release} 11)}, Sep. 2012.

\bibitem{Boyd07_CvxOpt}
S.~Boyd and L.~Vandenberghe, \emph{Convex Optimization}.\hskip 1em plus 0.5em
  minus 0.4em\relax Cambridge University Press, 2007.

\bibitem{Boyd12_CvxTool}
\BIBentryALTinterwordspacing
M.~Grant and S.~Boyd. (2012, Oct.) {CVX}: {Matlab} software for disciplined
  convex programming (web page and software). [Online]. Available:
  \url{http://stanford.edu/~boyd/cvx}
\BIBentrySTDinterwordspacing

\bibitem{3GPP_TR36814}
{3GPP TR 36.814 ver.9.0.0}, \emph{Further advancements for E-UTRA physical
  layer aspects (Release 9)}, Mar. 2010.

\end{thebibliography}


\end{document}